\newtheorem{defn}{Definition}
\newcommand{\opt}{\Delta^*}
\newcommand{\tree}{\textsf{T}}
\newtheorem{lemma}{Lemma}
\newtheorem{theorem}[lemma]{Theorem}
\newtheorem{corollary}[lemma]{Corollary}
\begin{document}
\title{Near-linear Time Algorithm for Approximate Minimum Degree Spanning Trees \thanks{This work has been supported in part by the Zhongguancun Haihua Institute for Frontier Information Technology.}}

\author[1]{Ran Duan\thanks{duanran@mail.tsinghua.edu.cn}}
\author[1]{Haoqing He\thanks{hehq13@mails.tsinghua.edu.cn}}
\author[1]{Tianyi Zhang\thanks{tianyi-z16@mails.tsinghua.edu.cn}}

\affil[1]{Institute for Interdisciplinary Information Sciences, Tsinghua University}

\maketitle

\begin{abstract}
Given a graph $G = (V, E)$, we wish to compute a spanning tree whose maximum vertex degree, i.e. tree degree, is as small as possible. Computing the exact optimal solution is known to be NP-hard, since it generalizes the Hamiltonian path problem. For the approximation version of this problem, a $\tilde{O}(mn)$ time algorithm that computes a spanning tree of degree at most $\opt +1$ is previously known [F\"urer \& Raghavachari 1994]; here $\opt$ denotes the minimum tree degree of all the spanning trees. In this paper we give the first near-linear time approximation algorithm for this problem. Specifically speaking, we propose an $\tilde{O}(\frac{1}{\epsilon^7}m)$ time algorithm that computes a spanning tree with tree degree $(1+\epsilon)\opt + O(\frac{1}{\epsilon^2}\log n)$ for any constant $\epsilon \in (0,\frac{1}{6})$. Thus, when $\opt=\omega(\log n)$, we can achieve approximate solutions with constant approximate ratio arbitrarily close to 1 in near-linear time.
\end{abstract}

\clearpage

\section{Introduction}
Computing minimum degree spanning trees is a fundamental problem that has inspired a long time of research. Let $G = (V, E)$ be an undirected graph, and we wish to compute a spanning tree of $G$ whose tree degree, or maximum vertex degree in the tree, is the smallest. Clearly this problem is NP-hard as the Hamiltonian path problem can be reduced to it, and so we could only hope for a good approximation in polynomial time. The optimal approximation of this problem was achieved in \cite{furer1994approximating} where the authors proposed a \footnote{$\tilde{O}(\cdot)$ hides poly-logarithmic factors.}$\tilde{O}(mn)$ time algorithm that computes a spanning tree of tree degree $\leq \opt + 1$; conventionally $n = |V|, m = |E|$ and $\opt$ denotes the minimum tree degree of all the  spanning trees. For convenience, in this paper the degree of a vertex usually means its tree degree in the current spanning tree.

\subsection{Our results}
The major result of this paper is a near-linear time algorithm for computing minimum degree spanning trees in undirected graphs. To the best of our knowledge, this is the first near-linear time algorithm for this problem. Formally we propose the following statement.

\begin{theorem}\label{aug-path}
For any constant $\epsilon \in (0, \frac{1}{6})$, there is an algorithm that runs in  $O(\frac{1}{\epsilon^7}m\log^7 n)$ time which computes a spanning tree with tree degree at most $(1+\epsilon)\opt + \frac{576}{\epsilon^2}\log n$.
\end{theorem}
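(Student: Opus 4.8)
The plan is to start from an arbitrary spanning tree $T$ — e.g.\ a DFS tree, built in $O(m)$ time — and repeatedly push its maximum degree down by a \emph{batched} version of the Fürer--Raghavachari swap, stopping the moment the degree is small enough that a Fürer--Raghavachari-style witness certifies the claimed bound. Recall the elementary swap: if $(u,v)\in E\setminus T$ with $\deg_T(u),\deg_T(v)$ both small and the tree path $T[u,v]$ contains a high-degree vertex $w$, then inserting $(u,v)$ and deleting a tree edge incident to $w$ on the resulting cycle lowers $\deg_T(w)$ by one while raising $\deg_T(u),\deg_T(v)$ by one. We will only ever apply such a swap while the current maximum degree $D$ satisfies $D\ge \frac{5}{16\epsilon^2}\log n$, since otherwise the target inequality holds trivially (as $\opt\ge 1$); this is exactly the regime in which a buffer of width $\ell=\Theta(\epsilon D)$ is simultaneously $O(\epsilon D)$ and $\Omega(\tfrac1\epsilon\log n)$, which is what the two parts of the argument need.

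\textbf{Main loop, the buffer, and the potential.} I would organize the computation into $O(\frac1\epsilon\log n)$ outer stages. At the start of a stage let $D$ be the current maximum degree; the stage aims to bring it down to at most $(1-\tfrac\epsilon3)D$. Throughout the stage we perform only swaps whose two endpoints have degree at most $D-\ell$, where $\ell=\Theta(\epsilon D)$; this guarantees that the potential $\Phi=\sum_{v:\,\deg_T(v)\ge D-\ell/2}\bigl(\deg_T(v)-(D-\ell/2)+1\bigr)$ is non-increasing and that a ``safe'' vertex cannot cross the threshold $D-\ell/2$ until it has absorbed $\Omega(\ell)$ new tree edges. Inside a stage I would run $O(\log n)$ rounds, each round locating a large batch of essentially non-conflicting profitable swaps (no two sharing the pivoting high-degree vertex, and no swap destroying an edge another needs) and applying them simultaneously; a dynamic-tree structure (link--cut or Euler-tour trees) supports finding and executing such a batch in $\tilde O(m)$ time. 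A charging argument — using that the $\Omega(\ell)=\Omega(\tfrac1\epsilon\log n)$ width absorbs the $O(\log n)$ per-round slack — shows each round either shrinks the number of vertices of degree $\ge D-\ell/2$ by a constant factor, or already brings the maximum degree below $D-\ell/2\le(1-\tfrac\epsilon3)D$, or produces no profitable swap at all. Geometric decay of $D$ across stages bounds the stage count by $O(\frac1\epsilon\log n)$; together with $O(\log n)$ rounds per stage, the dynamic-tree overhead, and the $\mathrm{poly}(\tfrac1\epsilon,\log n)$ cost of one batched search, the total is $O(\frac1{\epsilon^7}m\log^7 n)$.

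\textbf{The lower bound.} The only way a stage stalls is that, after all profitable swaps are exhausted, a set $B$ of vertices of degree $\ge D-\ell/2$ remains. Put $S=\{v:\deg_T(v)\ge D-\ell\}$. Since no swap with both endpoints of degree $\le D-\ell$ pivots through a vertex of $B$, the tree path of every non-tree edge joining two components of $T-S$ avoids $B$; the standard Fürer--Raghavachari analysis then shows that deleting $S$ from $G$ itself (not merely from $T$) disconnects it into at least $\bigl(D-\ell\bigr)|S|-O(|S|)$ pieces, whence $\opt\ \ge\ \frac{\#\mathrm{comp}(G-S)+|S|-1}{|S|}\ \ge\ D-\ell-O(1)$. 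Plugging in $\ell=\Theta(\epsilon D)$ gives $D\le(1+\Theta(\epsilon))\opt+O(1)$; folding this together with the trivial case $D<\frac{5}{16\epsilon^2}\log n$ and tuning the hidden constants yields $D\le(1+\epsilon)\opt+\frac{5}{16\epsilon^2}\log n$, and we output the current tree.

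\textbf{Main obstacle.} The crux is twofold and lies entirely in the two middle steps. First, the batched round must apply many swaps at once without conflicts yet still achieve constant-factor progress on the count of near-maximum-degree vertices, all within near-linear time under the dynamic-tree interface; naive sequential swapping is $\tilde O(mn)$. Second, the witness must be tight enough: one must argue that the $\approx(D-\ell)|S|$ component bound for $G-S$ genuinely survives the medium-degree vertices and the leftover non-tree edges among the fragments, since it is precisely the slack here (and the choice of $\ell$) that determines the final additive constant $\frac{5}{16\epsilon^2}\log n$.
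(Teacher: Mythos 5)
There is a genuine gap, and it sits exactly at the step your whole argument leans on: the lower-bound witness when a stage stalls. When no profitable single swap remains, what you actually know is that every non-tree edge whose two endpoints have degree $\le D-\ell$ has a tree path avoiding $B$; such an edge can still connect two different components of $T\setminus S$ by a tree path that passes only through \emph{medium-degree} vertices (degree between $D-\ell+1$ and $D-\ell/2-1$), and the edge itself survives in $G\setminus S$. So it does not follow that $G\setminus S$ breaks into $\approx(D-\ell)|S|$ pieces; $G\setminus S$ may even be connected, and the only valid boundary set for the components of $T\setminus B$ is the set of \emph{all} vertices of degree $>D-\ell$. That set is uncontrolled: every swap you perform pushes its two endpoints up by one, so vertices of degree exactly $D-\ell+1$ are created in bulk during the stage (your potential $\Phi$ only prevents crossings of the $D-\ell/2$ threshold, not of $D-\ell$), and the witness denominator can blow up, leaving only a bound like $\opt\ge (D-\ell)|B|/|S'|$ with $|S'|\gg|B|$. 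This is precisely the difficulty the paper is built around: it is overcome there by (a) using augmenting \emph{sequences} (chains of vertex-disjoint non-tree edges threading through marked, i.e.\ medium-degree, vertices), not single swaps; (b) a blocking-flow-style layering with $\Theta(\frac1\epsilon\log n)$ layers plus a pigeonhole step choosing two adjacent layers with ratio $\le 1+\epsilon$, rather than one fixed cut; and (c) an explicit bound on the number of newly marked vertices and on $|S_{k-1}|$ relative to $d_k$, enforced by the termination condition ``$d_k$ failed to drop by a $(1-\frac{\epsilon^2}{2\log n})$ factor'' together with the choice of $k$ (the $d_{k-1}\le 2d_k$ test in the large-step phase, the potential-maximal level in the small-step phase). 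None of these appear in your proposal, and without them the inequality $\opt\ge D-\ell-O(1)$ is unjustified. A sanity check confirms something is off: your claimed conclusion, a near-linear-time $(1+\epsilon)\opt+O(1)$ guarantee from single-swap local search, is strictly stronger than what the paper itself achieves (additive $\Theta(\frac{1}{\epsilon^2}\log n)$) and nearly matches the $\opt+1$ result that costs $\tilde O(mn)$.

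A secondary but real gap: the batched round is only asserted. The claim that in $\tilde O(m)$ time one can find and simultaneously apply a conflict-free set of swaps that shrinks the number of vertices of degree $\ge D-\ell/2$ by a constant factor is not proved and is not obviously true; the paper instead gets its progress guarantee from shortest augmenting sequences (the length strictly increases between layerings, giving at most $O(\frac1\epsilon\log n)$ layering rounds) and from the two-phase accounting ($d_k$-decrease in the large-step phase, the potential $\phi(\tree)$ in the small-step phase), which is where its $O(\frac{1}{\epsilon^7}m\log^7 n)$ bound actually comes from. As written, your runtime and your approximation guarantee both rest on steps that have not been established.
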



The core argument of Theorem \ref{aug-path} is that, starting from an arbitrary spanning tree, we repeatedly search for a sequence of distinct non-tree edges, named as \emph{augmenting sequence} (The formal definition is given in Section~\ref{degred}), to modify the current spanning tree which immediately reduces the degree of some high-degree vertex.
The idea of augmenting sequence is similar to~\cite{furer1994approximating}, that is, given a fixed degree bound $k$, an augmenting sequence w.r.t. the current spanning tree and $k$ is a sequence of vertex-disjoint non-tree edges $(w_1, z_1), (w_2, z_2), \cdots, (w_h, z_h)$ such that $w_1,w_2,\cdots,w_{h-1}$ have tree degree $k-1$ and $w_h,z_1,z_2,\cdots, z_h$ have tree degree $<k-1$. Also there is a vertex $w_0$ with tree degree $\geq k$ on the tree path between $w_1$ and $z_1$,  and $w_i$ for $1\leq i<h$ is on the tree path between $w_{i+1}$ and $z_{i+1}$ but not on the tree path between $w_{j}$ and $z_{j}$ for $j>i+1$. Then we can add theses edges $(w_1, z_1) \cdots, (w_h, z_h)$ to the spanning tree and delete the edges associated with $w_0,\cdots,w_{h-1}$ on the cycles formed, so the total degree of vertices with degree $\geq k$ will decrease by 1 but more degree-$(k-1)$ vertices may emerge. 

In our process of searching, similar to the blocking flow approach~\cite{dinitz70} for max-flow, we first construct a layering of the graph by the shortest length of augmenting sequences, then each time find a shortest augmenting sequence in the layering and do such tree modification by this augmenting sequence, thus after near-linear time the shortest length of augmenting sequences would increase. We repeat this until the length of the shortest augmenting sequence is longer than $\frac{1}{\epsilon}\log n$. 
When this happens, the number of layers also exceeds $\frac{1}{\epsilon}\log n$, so there are two adjacent layer whose ratio is at most $1+\epsilon$, then if the number of augmenting sequences we found are not too large (not too many new degree-$(k-1)$ vertices emerge), we can argue a $1+O(\epsilon)$ approximation for the optimal solution $\opt$. In the whole procedure of our algorithm, we can let $k=(1-O(\epsilon))\Delta$ for the degree $\Delta$ of the current spanning tree, and make $k$ increase by one after each iteration until in some iteration the sum of degree of all the vertices with degree $\ge k$ is not significantly decreased. See Section~\ref{sec:main-step}.

\subsection{Related work}
There is a line of works that are concerned with low-degree trees in weighted undirected graphs. In this scenario, the target low-degree that we wish to compute is constrained by two parameters: an upper bound $B$ on tree degree, an upper bound $C$ on the total weight summed over all tree edges. The problem was originally formulated in \cite{fischer1993optimizing}. Two subsequent papers \cite{konemann2000matter,konemann2003primal} proposed polynomial time algorithms that compute a tree with cost $\leq wC$ and degree $\leq \frac{w}{w-1}bB + \log_b n$, $\forall b, w> 1$. This result was substantially improved by \cite{chaudhuri2005would}; using certain augmenting path technique, their algorithm is capable of finding a tree with cost $\leq C$ and degree $B + O(\log n / \log\log n)$. Results and techniques from \cite{chaudhuri2005would} might sound similar to ours, but in undirected graphs we are actually faced with different technical difficulties. \cite{chaudhuri2005would}'s result was improved by \cite{goemans2006minimum} where for all $k$, a spanning tree of degree $\leq k+2$ and of cost at most the cost of the optimum spanning tree of maximum degree at most $k$ can be computed in polynomial time. The degree bound was later further improved from $k+2$ to the optimal $k+1$ in \cite{singh2007approximating}.

Another variant is minimum degree Steiner trees which is related to network broadcasting \cite{ravi1994rapid,ravi1993many,fraigniaud2001approximation}. For undirected graphs, authors of \cite{furer1994approximating} showed that the same approximation guarantee and running time can be achieved as with minimum degree spanning trees in undirected graphs, i.e., a solution of tree degree $\opt + 1$ and a running time of $\tilde{O}(mn)$. For the directed case, \cite{fraigniaud2001approximation} showed that directed minimum degree Steiner tree problem cannot be approximated within $(1-\epsilon)\log|D|, \forall \epsilon > 0$ unless $\mathrm{NP}\subseteq \mathrm{DTIME}(n^{\log\log n})$, where $D$ is the set of terminals.

The minimum degree tree problem can also be formulated in directed graphs. This problem was first studied in \cite{furer1990nc} where the authors proposed a polynomial time algorithm that finds a directed spanning tree of degree at most $O(\opt\log n)$. The approximation guarantee was improved to roughly $O(\opt + \log n)$ in \cite{dmdst,klein2004approximation} while the time complexity became $n^{O(\log n)}$. The problem becomes much easier when $G$ is acyclic, as shown in \cite{yao2008polynomial}, where a directed spanning tree of degree $\leq \opt + 1$ is computable in polynomial time. The approximation was greatly advanced to $\opt + 2$ in \cite{bansal2009additive} by an LP-based polynomial time algorithm, and this problem has become more-or-less closed since then.

\section{Preliminary}
Let $G=(V,E)$ be the graph we consider, and we assume $G$ is a connected graph. Logarithms are taken at base 2. During the execution of our algorithm, a spanning tree $\tree$ will be maintained. For every $u\in V$, let $\deg(u)$ be the tree degree of $u$ in $\tree$, and the degree of the spanning tree $\tree$ is defined as $\Delta = \max_{u\in V} \deg(u)$. Our algorithm will repeatedly modify $\tree$ to reduce its degree $\Delta$. Let $\opt$ denote the minimum degree of all the spanning trees. For each pair $u, v\in V$, let $\rho_{u, v}$ be the unique tree path that connects $u$ and $v$ in $\tree$. For each $1\leq k\leq n$, define $S_k = \{u\mid \deg(u) \geq k \}$ to be the set of vertices of degree at least $k$, $N_k = \{u\mid \deg(u) = k \}$ to be the set of vertices of degree exactly $k$, and $d_k = \sum_{u\in S_k} \deg(u)$ to be the sum of degrees of vertices in $S_k$.

\subsection{Boundary edge and boundary set}
Boundary edge and boundary set are important concepts to get the lower bound of $\opt$.
\begin{defn} \label{def:boundary}
For a graph $G = (V, E)$ and a sequence of disjoint vertex subsets $V_1, V_2, \cdots, V_l \subseteq V$, an edge $(u,v)\in E$ is called a boundary edge if $u\in V_i, v\in V_j$ for $1\leq i\neq j\leq l$, or $u\in V_i$ for some $i$ but $v\notin V_1\cup\cdots\cup V_l$. A vertex set $W$ is called a boundary set (with respect to $V_1, V_2, \cdots, V_l$), if for every boundary edge $(u,v)$, at least one of $u,v$ belongs to $W$.
\end{defn}

\begin{lemma}\label{undir-witness}
Let $V_1, V_2, \cdots, V_l\subseteq V$ be a sequence of disjoint vertex subsets, $W$ be a boundary set and $\opt$ be the minimum degree of all the spanning tree in $G$. Then, $\opt \geq \frac{l-1}{|W|}$.
\end{lemma}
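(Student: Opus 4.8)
The plan is to fix a spanning tree $T^\star$ of $G$ that realizes the optimum — so $\deg_{T^\star}(u)\le\opt$ for all $u$ — and to study its \emph{boundary edges}, i.e. the edges of $T^\star$ that are boundary edges in the sense of Definition~\ref{def:boundary} for the given subsets $V_1,\dots,V_l$. I would reduce the lemma to two claims. (i) $T^\star$ contains at least $l-1$ boundary edges. (ii) Since $W$ is a boundary set, every boundary edge of $T^\star$ is incident to $W$, so the number of such edges is at most $\sum_{w\in W}\deg_{T^\star}(w)\le|W|\,\opt$. Combining (i) and (ii) yields $l-1\le|W|\,\opt$, which is exactly $\opt\ge(l-1)/|W|$. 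Throughout I assume the $V_i$ are nonempty; if some are empty one simply replaces $l$ by the number of nonempty sets, and the case $W=\emptyset$ is degenerate since it forces $l\le 1$.

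For claim (i), first I would delete from $T^\star$ all of its boundary edges; writing $b$ for their number, the result is a spanning forest $F$ of $G$ with exactly $b+1$ connected components, because deleting an edge from a forest always splits one component into two. The crux is the structural observation that every component of $F$ is \emph{monochromatic}: it lies entirely within a single $V_i$, or entirely within $O:=V\setminus(V_1\cup\cdots\cup V_l)$. This holds because a path inside a component of $F$ traverses only non-boundary edges of $T^\star$, and the definition of a boundary edge forbids such a path from leaving a set $V_i$ it has entered, or from joining a vertex of some $V_i$ to a vertex of $O$ — a short case analysis on the first edge along the path that would exit $V_i$ makes this rigorous. Given monochromaticity, each nonempty $V_i$ contains the whole component of any one of its vertices, and distinct $V_i$ are disjoint, so $F$ has at least $l$ components; hence $b+1\ge l$, i.e. $b\ge l-1$.

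Claim (ii) is immediate from the definition of a boundary set, so the only substantive step is the monochromaticity of the components of $F$; everything else is edge counting in a tree. I expect the main obstacle to be not the idea but the bookkeeping around degenerate inputs (empty $V_i$, the vacuous $W=\emptyset$) and the temptation to argue by contracting each $V_i$ to a single vertex — that also works, but one must remember that a $V_i$ need not be connected inside $T^\star$, so one should contract and then pass to a spanning tree of the (possibly multigraph) result before counting; the forest-deletion argument above sidesteps this entirely.
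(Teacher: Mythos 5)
Your proposal is correct and follows essentially the same route as the paper: every spanning tree must contain at least $l-1$ boundary edges, each boundary edge is incident to $W$, and averaging the degree over $W$ gives $\opt \geq \frac{l-1}{|W|}$. The only difference is that you spell out the count of $l-1$ boundary tree edges (via deleting them and observing the resulting components are monochromatic), a step the paper simply asserts, and you phrase the final step as a total-degree bound rather than an explicit pigeonhole, which is the same argument.
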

\begin{proof}
By Definition~\ref{def:boundary}, every set $V_i$ can only be connected to other vertices by boundary edges, so for any spanning tree $T$ of $G$, there are at least $l-1$ boundary edges connecting $V_1, V_2, \cdots, V_l$ in $T$. Then for any boundary edge $(u,v)$, at least one of $u,v$ belongs to $W$. Thus by the pigeon-hole principle, there exists a $u\in W$ whose tree degree is $\geq \frac{l-1}{|W|}$.
\end{proof}

\section{A $(1+\epsilon)\opt + O(\frac{1}{\epsilon^2}\log n)$ Approximation}

Let $\epsilon \in (0, \frac{1}{48})$ be a fixed parameter. This algorithm starts from an arbitrary spanning tree $\tree$ and keeps modifying $\tree$ to decrease its tree degree $\Delta$. It consists of two phases: the \emph{large-step phase} and the \emph{small-step phase}. 

\begin{itemize}
	\item In the large-step phase, as long as $\Delta\geq \frac{10\log^2 n}{\epsilon^3}$, we repeatedly apply a near-linear time subroutine that, either $\Delta$ is reduced to $\leq (1-\epsilon)\cdot \Delta$ or a spanning tree $\tree$ is returned with the guarantee that $\Delta = (1+O(\epsilon))\opt$. 
	\item In the small-step phase, we need to deal with the situation where $\frac{9\log n}{\epsilon^2}\leq \Delta < \frac{10\log^2 n}{\epsilon^3}$. In this case we repeatedly run a weaker near-linear time subroutine that either $\Delta$ is reduced by $1$ or a spanning tree $\tree$ is returned with the guarantee that $\Delta \leq (1+O(\epsilon))\opt + O(\frac{\log n}{\epsilon^2})$.
\end{itemize}

Both phases rely on a degree reduction algorithm \textsf{AugSeqDegRed(k)}. The \textsf{AugSeqDegRed(k)} efficiently reduces the total degree of vertices with degree $\geq k$ by $1$ using an augmenting sequence technique.

For the rest of this section, we first propose and analyze the degree reduction algorithm \textsf{AugSeqDegRed} which underlies the core of our main algorithm. After that we specify how the large-step phase and the small-step phase work. Finally, we prove Theorem \ref{aug-path}.

\subsection{Degree reduction via augmenting sequences}\label{degred}

For a fixed threshold $k\leq \Delta$, a simple idea is that we repeatedly look for non-tree edges that connect two vertices of tree degree $\leq k-2$ from different components of $\tree\setminus S_k$ and add these edges to $\tree$, while at the same time we delete some edges incident on $S_k$ to eliminate cycles, so the tree degrees of vertices become more balanced. In this algorithm, we continue to explore possibilities of improving the tree structure using the idea of augmenting sequence as in~\cite{furer1994approximating}. For a non-tree edge $(u,v)$ that connects two different components $C_u,C_v$ of $\tree\setminus S_k$ where $deg(u) = k-1$, we try to add $(u,v)$ to $\tree$ and delete some edge incident on $S_k$ to eliminate cycles. At the same time, as $deg(u)$ increases to $k$, we keep looking for a sequence of distinct non-tree edges inside $C_u$ to add to $\tree$ and delete a sequence of tree edges to eliminate cycles.

A difficulty is that when the degrees of some original $(k-1)$-degree vertices decrease, it is hard to make the layering of the graph stable. Therefore, we define \emph{marked vertices} instead of the concept of the vertices with degree $\geq k-1$. Given a degree threshold $k\leq \Delta$, a vertex gets marked whenever its tree degree becomes $k-1$, and it stays marked even if its tree degree becomes below $k-2$ afterwards. We only re-initialize the set of marked vertices when we change $k$ in Section~\ref{sec:main-step}. Then we can define \emph{augmenting sequence} formally.
\begin{defn}[augmenting sequence]\label{aug-def}
An $h$-length augmenting sequence consists of a sequence of vertex-disjoint non-tree edges $(w_1, z_1), (w_2, z_2), \cdots, (w_h, z_h)\in E$ with the following properties.
\begin{enumerate}[(\romannumeral1)]
	\item $\exists w_0\in\rho_{w_1, z_1}\cap S_k$, and for all $0\leq i < h$,$w_i\in \rho_{w_{i+1}, z_{i+1}}\setminus(\bigcup_{j=i+2}^h\rho_{w_j, z_j})$.
	\item All $z_i$'s are unmarked ($\forall 1\leq i\leq h$); $w_i$'s are marked for all $1\leq i<h$ and $w_h$ is unmarked.
\end{enumerate}
\end{defn}
Then the tree can be modified by the augmenting sequence $(w_1, z_1), (w_2, z_2), \cdots, (w_h, z_h)$ by:
\begin{lemma}[tree modification]\label{aug-tree}
Given an augmenting sequence $(w_1, z_1), (w_2, z_2), \cdots, (w_h, z_h)\in E$, one can modify $\tree$ such that $d_k$ decreases and no vertices are added to $S_k$. Also $\Delta$ cannot increase.
\end{lemma}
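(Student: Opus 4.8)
The plan is to construct the modified tree $\tree'$ by processing the augmenting sequence from the last edge $(w_h, z_h)$ backwards to the first edge $(w_1, z_1)$, maintaining as an invariant that after handling edges $(w_i, z_i), \dots, (w_h, z_h)$ we have a spanning tree in which the degree of $w_i$ has been decremented by one relative to the original $\tree$, the degrees of $w_{i+1}, \dots, w_{h-1}$ are unchanged, $w_h, z_1, \dots, z_h$ have had their degrees changed by at most $+1$, and no vertex of $S_k$ has gained degree. Concretely, I would proceed as follows. First I would observe that by property (\romannumeral1), $w_{i-1}$ lies on the tree path $\rho_{w_i, z_i}$; adding the non-tree edge $(w_i, z_i)$ to the current tree creates a unique cycle which contains $w_{i-1}$ and its two incident tree edges on $\rho_{w_i, z_i}$; I delete one of these two edges, namely the edge $e_{i-1}$ from $w_{i-1}$ towards (say) $z_i$ along $\rho_{w_i,z_i}$. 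This breaks the cycle, so the result is again a spanning tree, it decrements $\deg(w_{i-1})$ by one, and it increments the degrees of $w_i$ and $z_i$ by one each — except that when we later process edge $(w_{i-1}, z_{i-1})$ we will again delete an edge incident to $w_{i-1}$, so its net change telescopes correctly. Iterating down to $i = 1$: adding $(w_1, z_1)$ creates a cycle through $w_0 \in \rho_{w_1,z_1}\cap S_k$; I delete the tree edge $e_0$ incident to $w_0$ on that cycle, which decrements $\deg(w_0)$ by one.

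The key points to verify are then: (a) that the edges we delete are all distinct and that each deletion genuinely breaks the cycle created by the corresponding insertion, so the object stays a tree at every step — this uses the "subtractive" clause of property (\romannumeral1), namely $w_i \in \rho_{w_{i+1},z_{i+1}} \setminus \bigcup_{j>i+1}\rho_{w_j,z_j}$, which guarantees that when we insert $(w_{i+1},z_{i+1})$ the vertex $w_i$ and its incident path-edges have not already been removed by a later step, and that the non-tree edges are vertex-disjoint so the cycles interact only through the $w_i$ junctions; (b) the bookkeeping on degrees. For (b): $w_0 \in S_k$ loses one tree neighbor, so $\deg(w_0)$ drops by $1$. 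Each $w_i$ for $1 \le i \le h-1$ gains one neighbor from inserting $(w_i,z_i)$ and loses one from deleting $e_i$ when we process $(w_{i+1}, z_{i+1})$ — wait, I need to be careful about the exact indexing of which deletion touches which $w_i$; I would set it up so that the deletion associated with inserting $(w_{i+1}, z_{i+1})$ removes an edge at $w_i$, hence $w_i$'s net degree change is $+1 - 1 = 0$ for $1 \le i \le h-1$, while $w_0$ is only a deletion endpoint ($-1$) and $w_h, z_1, \dots, z_h$ are only insertion endpoints ($+1$ each), none of which belong to $S_k$: indeed $w_h$ and all $z_i$ are unmarked by property (\romannumeral2), so in particular their current degree is less than $k-1 < k$, hence after a $+1$ increment they still have degree $< k$ (using $k \le \Delta$, $k \ge 3$, so $k-1$ makes sense), so they are not added to $S_k$. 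The marked-but-degree-$<k-1$ vertices $w_1,\dots,w_{h-1}$ have net change $0$, so they do not enter $S_k$ either. Therefore $S_k$ only loses the contribution of $w_0$'s removed edge: $d_k = \sum_{u \in S_k}\deg(u)$ decreases by exactly $1$ (if $w_0$ stays in $S_k$) or by $\deg_{\mathrm{old}}(w_0) \ge k$ (if $\deg(w_0)$ drops below $k$), and in either case $d_k$ strictly decreases and no vertex is added to $S_k$.

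The main obstacle I anticipate is (a): making rigorous that after all the edge swaps the graph is still a tree, rather than accidentally disconnecting it or leaving a cycle. The cleanest way is induction: process $(w_h,z_h)$ first in the original tree — since $(w_h,z_h)$ is a non-tree edge, inserting it creates exactly one cycle $C_h$, and $w_{h-1}$ lies on $C_h$ (as $w_{h-1}\in\rho_{w_h,z_h}$), so deleting a path-edge at $w_{h-1}$ restores a spanning tree $\tree_h$; then argue inductively that in $\tree_{i+1}$ the non-tree edge $(w_i, z_i)$ still has the property that $w_{i-1}$ lies on the unique tree cycle it induces, which requires knowing that the tree path between $w_i$ and $z_i$ in $\tree_{i+1}$ still passes through $w_{i-1}$. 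This is exactly where the condition $w_i \notin \bigcup_{j \ge i+2}\rho_{w_j,z_j}$ is used: the later modifications took place "away from" $w_i$'s relevant subtree, so the portion of the tree near $w_{i-1}$ relevant to $\rho_{w_i, z_i}$ was untouched. I would make this precise by tracking, for each $i$, which edges have been deleted so far and checking they all lie outside $\rho_{w_i, z_i}$ at the moment $(w_i, z_i)$ is processed; the vertex-disjointness of the non-tree edges plus property (\romannumeral1) should close this. Once the tree-ness is established, the degree accounting above finishes the proof.
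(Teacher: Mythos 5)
Your plan is the same as the paper's: process the sequence backwards, for each step insert the non-tree edge $(w_{i+1},z_{i+1})$ and delete a tree edge incident to $w_i$ on the induced cycle, then do exactly the degree accounting you describe ($w_0$ loses one, $w_1,\dots,w_{h-1}$ net zero, $w_h$ and the $z_i$'s gain one but are unmarked hence of degree $\leq k-2$, so nothing enters $S_k$). That part is fine.

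The one place your write-up would not go through as stated is the way you propose to make point (a) rigorous, namely ``tracking which edges have been deleted so far and checking they all lie outside $\rho_{w_i,z_i}$.'' That invariant can fail: the edge deleted at $w_i$ while processing $(w_{i+1},z_{i+1})$ is an arbitrary edge of $\rho_{w_{i+1},z_{i+1}}$ incident to $w_i$, and nothing in Definition \ref{aug-def} prevents that edge from also being the first edge of $\rho_{w_i,z_i}$ (e.g.\ a tree path $a$--$w_{i-1}$--$b$--$w_i$--$c$ with a pendant $w_{i+1}$ at $b$, non-tree edges $(w_i,a)$ and $(w_{i+1},c)$: deleting $(w_i,b)$ is legal and lies on $\rho_{w_i,z_i}$). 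So the path from $w_i$ to $z_i$ may genuinely be rerouted by earlier swaps, and an argument based on the path being untouched does not close. The correct (and the paper's) formulation of the invariant is in terms of cut vertices rather than paths: what must be preserved is that $w_{j}$ still separates $w_{j+1}$ from $z_{j+1}$ in the current tree, and this holds because each swap inserts $(w_{i+1},z_{i+1})$ and deletes an edge of $\rho_{w_{i+1},z_{i+1}}$, both of which lie inside a single component of $\tree\setminus\{w_j\}$ for every $j\leq i-1$ (here property (\romannumeral1), $w_j\notin\rho_{w_{i+1},z_{i+1}}$ for $j\leq i-1$, is used), so the component structure of $\tree\setminus\{w_j\}$ is unchanged even though the path itself may move. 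With that substitution your argument is complete; in the rerouted-path scenario above, $w_{i-1}$ indeed still lies on the new $w_i$--$z_i$ path.
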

\begin{proof}
We modify $\tree$ in an inductive way. For $i=h-1, h-2, \cdots, 0$, as $w_i\in \rho_{w_{i+1}, z_{i+1}}$, we can take an arbitrary tree edge $(w_i, x)\in \rho_{w_{i+1}, z_{i+1}}$, and then perform an update $\tree\leftarrow \tree\cup \{(w_{i+1}, z_{i+1})\} \setminus \{(w_i, x)\}$ which guarantees that $\tree$ is still a spanning tree. Because $w_j\notin \rho_{w_{i+1}, z_{i+1}}$ for $0\leq j\leq i-1$, tree update $\tree\leftarrow \tree\cup \{(w_{i+1}, z_{i+1})\} \setminus \{(w_i, x)\}$ does not change the connected components of $\tree\setminus \{w_j\}$, so the property that $w_j\in \rho_{w_{j+1}, z_{j+1}}\setminus(\bigcup_{l=j+2}^h\rho_{w_l, z_l}), \forall 0\leq j < i$ is preserved.

During the process, if for any $z_i$, $\deg(z_i)$ $(1\leq i\leq h)$ becomes $k-1$ during the process, mark $z_i$. By definition, $d_k$ decreases as $w_0$ loses a tree neighbour; plus, no vertices are newly added to $S_k$ because all $\deg(w_i), 1\leq i<h$ are unchanged and $\deg(w_h)\leq k-2$, $\deg(z_i)\leq k-2, \forall 1\leq i\leq h$. Also vertices in $S_k$ can only lose tree neighbors so $\Delta$ cannot increase.
\end{proof}

Now, back to the \textsf{AugSeqDegRed} algorithm. The core of this algorithm is that, if the currently shortest augmenting sequences have length $h$ ($h<1+\log_{1+\epsilon}n$), it searches for augmenting sequences of length $h$ and applies Lemma \ref{aug-tree} to decrease $d_k$. When there is no augmenting sequence of length $h$, it repeats this process for some larger $h$. Finally this algorithm terminates when $h \geq 1 + \log_{1+\epsilon} n$ and we prove a lower bound on $\opt$ based on the structure of $\tree$.

First, we introduce the \textsf{Layering} algorithm which computes an auxiliary layering of the graph that will also help tree modification later. Initially set $B_0\leftarrow S_k$. Inductively, suppose $B_0, B_1, \cdots, B_{h}, h\geq 0$ is already computed, then we compute the forest spanned by $\tree \setminus (\bigcup_{i=0}^{h}B_i)$; for each $u\in V\setminus (\bigcup_{i=0}^{h}B_i)$, let $C_u^{h}$ be the connected component of $\tree \setminus (\bigcup_{i=0}^{h}B_i)$ that contains $u$. If there exists an edge $(u, v)\in E$ such that both $u, v$ are unmarked vertices, and that $C_u^{h}\neq C_v^{h}$, then the algorithm terminates and reports that the shortest length of augmenting sequences is equal to $h+1$; otherwise, we compute $B_{h+1}$ to be the set of all marked vertices $u\in V\setminus (\bigcup_{i=0}^{h}B_i)$ such that there exists an unmarked adjacent vertex $v$ with $C_u^{h}\neq C_v^{h}$, and then continue until $h > 1+\log_{1+\epsilon} n$. Note that whenever $B_h=\emptyset$, $B_{h+1},\cdots,B_{\lceil 1+ \log_{1+\epsilon}n\rceil}$ are all empty. The pseudo code is shown in the \textsf{Layering} algorithm~\ref{layer}.

\begin{algorithm}\label{layer}
	\caption{\textsf{Layering}}
	$B_0\leftarrow S_k$, $h\leftarrow 0$\;
	\While{$h< 1+\log_{1+\epsilon} n$}{
		compute the forest $\{C_u^{h}\}$ spanned by $\tree \setminus (\bigcup_{i=0}^{h}B_i)$\;
		\If{exists unmarked $u, v$ such that $(u, v)\in E$, $C_u^h\neq C_v^h$}{
			\textbf{break}\;
		}\Else{
			compute $B_{h+1}$ to be the set of all marked vertices $u\in V\setminus (\bigcup_{i=0}^{h}B_i)$ such that there exists an unmarked adjacent vertex $v$ with $C_u^{h}\neq C_v^{h}$\;
			$h\leftarrow h+1$\;
		}
	}
	\Return $h$ and $B_0, B_1, \cdots, B_h$\;
\end{algorithm}

After we have invoked \textsf{Layering} and computed a sequence of vertex subsets $B_0, B_1, \cdots, B_h$ which naturally divides the graph into $h+2$ layers (including a layer of other vertices), every time we will find a length-$(h+1)$ augmenting sequence $(w_1, z_1), (w_2, z_2), \cdots, (w_{h+1}, z_{h+1})$ such that $w_i\in B_i$ for $1\leq i\leq h$, then  apply tree modifications of Lemma~\ref{aug-tree} by this augmenting sequence. Repeat this until there is no more length-$(h+1)$ augmenting sequences any more. The difficulty in searching for the shortest augmenting sequences is that, for a search that starts from a pair of adjacent and unmarked vertices $u, v$ satisfying $C_u^h\neq C_v^h$ and goes up the layers $B_h, B_{h-1}, \cdots, B_1, B_0$, not every route can reach the top layer $B_0$ because some previous $(h+1)$-length augmenting sequences have already blocked the road. Therefore, a depth-first search needs to be performed. To save running time, some tricks are needed: if a certain vertex has been searched before by some previous $(h+1)$-length augmenting sequences and has failed to lead a way upwards to $B_0$, then we \emph{tag} this vertex so that future depth-first searches may avoid this tagged vertex; if a certain edge has been searched before, then we \emph{tag} this edge whatsoever. The \textsf{AugDFS} algorithm may be a better illustration of this algorithm. The recursive algorithm \textsf{AugDFS} takes the layer number $i$ and an edge $(u,v)$ as input and keeps searching for edges between a vertex $w \in (u,v) \cap B_{i-1}$ and an unmarked vertex $z$. If such an edge is found, invoke \textsf{AugDFS} with the parameter $(i-1, (w,z))$ and return the result plus $(u,v)$. The pseudo code is shown in the \textsf{AugDFS} algorithm~\ref{dfs}. Later we will prove that \textsf{AugDFS(h+1,(u,v))} always returns an augmenting sequence if exists.

\begin{algorithm}\label{dfs}
	\caption{\textsf{AugDFS(i,(u,v))}}
	\If{$i=1$}{
		\Return $(u,v)$\;
	}
	\For{untagged $w \in \rho_{u,v}\cap B_{i-1}$}{
		\For{unmarked $z$ such that $(w, z)$ is untagged and $C_{z}^{i-2}\neq C_{w}^{i-2}$}{
		    $p_{i-1} \leftarrow$ \textsf{AugDFS(i-1,(w,z))}\;
			tag $(w, z)$\;
			\If{$p_{i-1} \neq$ null}{
				let $p_i$ be $p_{i-1}$ plus $(u,v)$\;
				\Return $p_i$\;
			}
		}
		tag $w$\;
	}
	\Return null\;
\end{algorithm}

The upper-level \textsf{AugSeqDegRed} algorithm repeatedly applies \textsf{Layering} followed by several rounds of \textsf{AugDFS}.
Each time \textsf{AugDFS} returns an augmenting sequence $p$, modify $\tree$ by Lemma~\ref{aug-tree} via $p$. The repeat-loop ends when $h\geq 1+\log_{1+\epsilon} n$. The pseudo code is shown in the \textsf{AugSeqDegRed} algorithm~\ref{augpath}.

\begin{algorithm}\label{augpath}
	\caption{\textsf{AugSeqDegRed(k)}}
	mark all degree $k-1$ vertices, unmark other vertices\;
	\Repeat{$h\geq 1+\log_{1+\epsilon}n$}{
		run \textsf{Layering} which computes $h$ and $B_0, B_1, \cdots, B_h$\;
		untag all vertices and edges\;
		\For{$(u, v)\in E$ such that $u, v$ are unmarked and adjacent, and that $C_u^h\neq C_v^h$}{
			$p\leftarrow$\textsf{AugDFS(h+1,(u,v))}\;
			\If{$p\neq$ null}{
			    modify $\tree$ by augmenting sequence $p$ via Lemma \ref{aug-tree}\;
			}
		}
	}
	\Return $\tree$\;
\end{algorithm}

Before proving termination of \textsf{AugSeqDegRed}, we first need to argue some properties of \textsf{Layering}. The following lemma will serve as the basis for our future proof.

\begin{lemma}[the blocking property]\label{block}
Throughout each iteration of the repeat-loop in \textsf{AugSeqDegRed}, for any $1\leq i < h$ and  any two adjacent vertices $u, v\in V\setminus (\bigcup_{j=0}^i B_j)$ such that $u$ is  unmarked and $C_u^i\neq C_v^i$, then $v\in B_{i+1}$. (Recall that $C_u^{h}$ is the connected component of $\tree \setminus (\bigcup_{i=0}^{h}B_i)$ that contains $u$.)
\end{lemma}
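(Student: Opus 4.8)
The plan is to argue that the property is an invariant: it holds immediately after \textsf{Layering} finishes (by construction of the $B_j$'s), and it is preserved by every tree modification performed by \textsf{AugDFS} via Lemma \ref{aug-tree} within the same iteration of the repeat-loop. For the base case, recall how \textsf{Layering} builds $B_{i+1}$: it is exactly the set of marked vertices $v\in V\setminus(\bigcup_{j=0}^i B_j)$ for which there is an unmarked neighbour $u$ with $C_u^i\neq C_v^i$. So the only thing to check is that a vertex $v$ witnessed this way by some $u$ is actually \emph{marked} — but if $v$ were unmarked, then $(u,v)$ would be an edge between two unmarked vertices lying in different components of $\tree\setminus(\bigcup_{j=0}^i B_j)$, and since $i<h$ the \textsf{Layering} loop would have broken at level $i$ (the \textbf{if} condition) rather than proceeding to construct $B_{i+1}$. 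Hence every such witnessing $v$ is marked and lies in $B_{i+1}$, which is the claim right after \textsf{Layering}.

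Next I would show the invariant survives each application of Lemma \ref{aug-tree}. A tree modification driven by a length-$(h+1)$ augmenting sequence $(w_1,z_1),\dots,(w_{h+1},z_{h+1})$ with $w_j\in B_j$ only swaps, for each $j$, a tree edge incident to $w_j$ out of $\tree$ and puts the non-tree edge $(w_{j+1},z_{j+1})$ in. The key point, already isolated in the proof of Lemma \ref{aug-tree}, is that this update does not change the connected components of $\tree\setminus(\bigcup_{j=0}^i B_j)$ for the relevant index ranges: the edges removed are all incident to the $w_j\in\bigcup_{j\le h}B_j$, so modulo vertices in $\bigcup_{j=0}^i B_j$ with $i\ge$ the levels involved, the forest $\{C_u^i\}$ is unchanged — more precisely, deleting $(w_j,x)$ and adding $(w_{j+1},z_{j+1})$ with $w_j\in B_j\subseteq \bigcup_{l=0}^i B_l$ whenever $j\le i$ affects $\tree\setminus(\bigcup_{l=0}^i B_l)$ only for levels $j>i$, and for those levels the relevant swapped edges touch $w_j$ which is being deleted along with $\bigcup_{l=0}^i B_l$ only if $j\le i$; one has to bookkeep this carefully, but the upshot is that the partition of $V\setminus(\bigcup_{l=0}^i B_l)$ into components $C_\cdot^i$ is invariant under the modification. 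The set membership of marked vertices can only grow (a vertex may become newly marked when some $\deg(z_i)$ drops to $k-1$), which is harmless: the hypothesis of the lemma requires $u$ unmarked, and newly marked vertices only make \emph{more} vertices eligible to sit in $B_{i+1}$, never fewer. Therefore if $u,v\in V\setminus(\bigcup_{j=0}^i B_j)$ are adjacent with $u$ unmarked and $C_u^i\neq C_v^i$ after the modification, the same was true before (by component invariance and the fact that $u$ was already unmarked before, since markings are never removed within an iteration), so $v\in B_{i+1}$ held before and $B_{i+1}$ is not altered by the modification, giving $v\in B_{i+1}$ after.

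I expect the main obstacle to be the bookkeeping in the inductive step: verifying precisely that the sequence of edge swaps in Lemma \ref{aug-tree}, performed for $j=h,h-1,\dots,0$, never changes which component of $\tree\setminus(\bigcup_{l=0}^i B_l)$ a given vertex lies in, for every $i$ with $1\le i<h$ simultaneously. The cleanest way is to reuse the component-preservation statement already proved inside Lemma \ref{aug-tree} — namely that updating $\tree\leftarrow\tree\cup\{(w_{i+1},z_{i+1})\}\setminus\{(w_i,x)\}$ does not change the connected components of $\tree\setminus\{w_j\}$ for $j\le i-1$ — and lift it from single-vertex deletions to deletions of the whole prefix $\bigcup_{l=0}^i B_l$ by observing that all removed edges $(w_j,x)$ with $j\le i$ are incident to a vertex $w_j\in B_j\subseteq\bigcup_{l=0}^i B_l$ that is anyway deleted, and all removed edges with $j>i$ lie inside components that are "above" layer $i$ and whose internal connectivity is what we track. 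Once that structural claim is in hand, the monotonicity of the marked set closes the argument.
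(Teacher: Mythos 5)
Your overall route is the same as the paper's: verify the property right after \textsf{Layering} outputs $B_0,\dots,B_h$ (your base case, including the observation that the witnessing neighbour $v$ must be marked because otherwise the layering loop would already have broken at level $i<h$, is correct and in fact more detailed than the paper's one-line appeal to the rules of \textsf{Layering}), and then argue the property is preserved by every tree modification within the iteration, using that marks are never removed and the sets $B_j$ are fixed. However, the intermediate claim you lean on --- that each application of Lemma~\ref{aug-tree} leaves the partition of $V\setminus(\bigcup_{l=0}^{i}B_l)$ into components $C_\cdot^i$ \emph{invariant} --- is false, and it is exactly the point you flag as the main bookkeeping obstacle. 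In the swap performed at step $j=i$ of the modification, the edge $(w_{i+1},z_{i+1})$ is inserted with $w_{i+1}\in B_{i+1}$ and $z_{i+1}$ unmarked, and the search condition in \textsf{AugDFS} guarantees these two endpoints lie in \emph{different} level-$i$ components when the sequence is found; since neither endpoint lies in $\bigcup_{l=0}^{i}B_l$ (while the simultaneously deleted edge $(w_i,x)$ is incident to $w_i\in B_i$ and hence invisible at level $i$), this swap genuinely merges two level-$i$ components.

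What is true, and all your final inference needs, is the weaker monotonicity the paper states: the components $C_\cdot^i$ can only merge, never split. Deleted edges $(w_j,x)$ with $j\le i$ are incident to a vertex of $\bigcup_{l\le i}B_l$ and so do not affect the level-$i$ forest; a deleted edge $(w_j,x)$ with $j>i$ lies on the cycle closed by the inserted edge $(w_{j+1},z_{j+1})$, and the remainder of that cycle avoids $\bigcup_{l\le i}B_l$ --- by Corollary~\ref{struct} when $j+1\le h$, and for $j=h$ by the blocking property available just before the swap (if $\rho_{w_{h+1},z_{h+1}}$ met some $B_l$ with $l\le i<h$, the unmarked endpoint $z_{h+1}$ would have to lie in $B_{l+1}$, which consists of marked vertices) --- so no level-$i$ component is disconnected. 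With merge-only monotonicity your deduction goes through verbatim: if after a modification $u$ is unmarked and $C_u^i\neq C_v^i$, the same held before it (merges can only destroy, never create, inequality of components, and marks are only added), hence $v\in B_{i+1}$ already held, and $B_{i+1}$ is untouched during the iteration. Replace ``invariant'' by ``only merges, never splits'' and your proof coincides with the paper's.
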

\begin{proof}
By rules of \textsf{Layering}, this blocking property holds right after \textsf{Layering} outputs them. This claim continuous to hold afterwards because tree modifications only merge components $C_u^i$'s and never split any $C_u^i$'s.
\end{proof}

Here is an important corollary of this Lemma \ref{block}.
\begin{corollary}\label{struct}
Throughout each iteration of the repeat-loop, for any $w\in B_i, 1\leq i\leq h$, suppose $w$ is adjacent to an unmarked $z$ such that $C_w^{i-1}\neq C_z^{i-1}$. Then $\rho_{w, z}$ only contains vertices from $V\setminus (\bigcup_{j=0}^{i-2} B_j)$.
\end{corollary}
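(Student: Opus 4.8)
The plan is to show that every vertex on the tree path $\rho_{w,z}$ lies outside $\bigcup_{j=0}^{i-2} B_j$, and to do this by contradiction: I would assume some vertex $x\in\rho_{w,z}$ satisfies $x\in B_j$ for some $j\le i-2$, pick the one with smallest such $j$, and derive a violation of the layering structure. The key observation is that $w\in B_i$ means $w\notin\bigcup_{j=0}^{i-1}B_j$, and $z$ is unmarked so $z\notin\bigcup_{j=0}^{i-1}B_j$ either (the $B_j$'s for $j\ge 1$ consist only of marked vertices, and $z\notin B_0=S_k$ since an unmarked vertex has tree degree $<k-1<k$). So both endpoints $w,z$ lie in $V\setminus(\bigcup_{j=0}^{i-1}B_j)$, hence both live in the forest $\tree\setminus(\bigcup_{j=0}^{i-2}B_j)$ as well, but in \emph{different} components at level $i-1$ since $C_w^{i-1}\ne C_z^{i-1}$.

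First I would argue the path $\rho_{w,z}$ cannot contain any vertex of $B_0=S_k$: if it did, then $w$ and $z$ would already lie in different components of $\tree\setminus B_0$, i.e. $C_w^0\ne C_z^0$, which by the blocking property (Lemma \ref{block}, applied at level $i'=0$ — or more carefully, walking up the layers) would force both $w$ and $z$ into early layers, contradicting $w\in B_i$ with $i\ge 1$. More systematically, suppose for contradiction that $\rho_{w,z}$ meets $B_{j}$ for some $j\le i-2$ and let $j^\star$ be minimal. Removing the vertices of $\bigcup_{l=0}^{j^\star}B_l$ from $\tree$ separates $w$ from $z$, because the path $\rho_{w,z}$ is cut at a vertex of $B_{j^\star}$. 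Hence $C_w^{j^\star}\ne C_z^{j^\star}$. Now $w$ is (by hypothesis, $w\in B_i$ with $i\ge j^\star+2$) not in $\bigcup_{l=0}^{j^\star+1}B_l$ in particular $w\notin\bigcup_{l=0}^{j^\star}B_l$, $z$ is unmarked and hence likewise outside $\bigcup_{l=0}^{j^\star}B_l$, and $(w,z)\in E$. Apply Lemma \ref{block} with $i:=j^\star$, $u:=z$ (unmarked), $v:=w$: since $z,w\in V\setminus(\bigcup_{l=0}^{j^\star}B_l)$, $z$ is unmarked, and $C_z^{j^\star}\ne C_w^{j^\star}$, we conclude $w\in B_{j^\star+1}$. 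But $j^\star+1\le i-1<i$, contradicting $w\in B_i$ (the $B_l$'s are disjoint). This establishes that $\rho_{w,z}$ avoids $\bigcup_{j=0}^{i-2}B_j$ entirely.

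I expect the main subtlety to be the bookkeeping around ``which component at which level,'' in particular verifying carefully that cutting $\rho_{w,z}$ at a single vertex of $B_{j^\star}$ really does put $w$ and $z$ in distinct components of $\tree\setminus(\bigcup_{l=0}^{j^\star}B_l)$ — this uses that $\rho_{w,z}$ is the \emph{unique} tree path between $w$ and $z$, so any vertex on it is a cut vertex separating the two endpoints in the tree, and removing a superset of $\{x\}$ keeps them separated as long as neither $w$ nor $z$ is removed (which we checked). A second minor point is confirming $z\notin S_k$ from ``$z$ unmarked,'' which holds because the marking rule marks every vertex whose degree ever reaches $k-1$ and never unmarks within an iteration, so an unmarked vertex has always had degree $\le k-2<k$, hence $z\notin B_0$. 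Once these are in place the argument is the short contradiction above.
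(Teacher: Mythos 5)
Your proposal is correct and follows essentially the same route as the paper: assume a vertex of some $B_j$, $j\le i-2$, lies on $\rho_{w,z}$, deduce $C_w^j\neq C_z^j$, and invoke the blocking property (Lemma \ref{block}) to force $w\in B_{j+1}$, contradicting $w\in B_i$ and the disjointness of the layers; your version just adds bookkeeping (minimal $j^\star$, separate handling of $B_0$) that the paper leaves implicit. One tiny remark: the cleanest justification that $z\notin B_0$ is not the degree argument (as written, the algorithm only marks degree-$(k-1)$ vertices, so an $S_k$ vertex could in principle be unmarked) but simply that the hypothesis refers to $C_z^{i-1}$, which presupposes $z\in V\setminus(\bigcup_{j=0}^{i-1}B_j)\supseteq V\setminus B_0$.
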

\begin{proof}
Suppose otherwise, then there would be a vertex $x\in \rho_{w, z}\cap B_j, j\leq i-2$, then in this case $C_{w}^j \neq C_{z}^j$, and thus by Lemma \ref{block} $w\in B_{j+1}$ which is a contradiction as $j+1 < i$.
\end{proof}

Now we have the following lemmas:
\begin{lemma}\label{lem:disjoint}
If \textsf{AugDFS(h+1, (u,v))} returns a sequence of edges $(w_1, z_1),\cdots,(w_{h+1},z_{h+1})$, then $w_i\in B_i$ for $1\leq i\leq h$, and $w_{h+1}, z_1,\cdots z_{h+1}$ are unmarked, also the edges are vertex-disjoint.
\end{lemma}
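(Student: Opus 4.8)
The plan is to peel off the claims one at a time, following the recursive structure of \textsf{AugDFS}. First I would establish the layer-membership claim $w_i \in B_i$ and the unmarkedness of the $z_i$'s and of $w_{h+1}$ directly from the code: \textsf{AugDFS(i,(u,v))} only ever recurses on a pair $(w,z)$ with $w \in \rho_{u,v}\cap B_{i-1}$ and $z$ unmarked, and the top-level call \textsf{AugSeqDegRed} only invokes \textsf{AugDFS(h+1,(u,v))} on an unmarked pair $u,v$. So by induction on the recursion depth, if the returned sequence is $(w_1,z_1),\dots,(w_{h+1},z_{h+1})$ then $w_i\in B_i$ for $1\le i\le h$ (here $w_{h+1},z_{h+1}$ is the pair $(u,v)$ passed at the top level, which is unmarked), and each $z_i$ is unmarked because it was picked from the ``unmarked $z$'' loop. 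This part is essentially bookkeeping against the pseudocode.

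The substantive part is vertex-disjointness. I would argue this using Corollary \ref{struct} together with the layer structure. Fix the returned sequence; by the membership claim, $w_i\in B_i$ and, since $(w_i,z_i)$ was a valid recursion pair at level $i+1$, we have $z_i$ unmarked with $C_{w_i}^{i-2}\neq C_{z_i}^{i-2}$ (the condition checked in the inner \texttt{for}); hence by Corollary \ref{struct}, $\rho_{w_i,z_i}$ contains only vertices of $V\setminus(\bigcup_{j=0}^{i-2}B_j)$, in particular $w_i$ and $z_i$ themselves lie in $V\setminus(\bigcup_{j=0}^{i-2}B_j)$, so $z_i\notin B_0\cup\cdots\cup B_{i-2}$, and since $z_i$ is unmarked while every vertex of $B_j$ is marked, in fact $z_i\notin B_0\cup\cdots\cup B_i$ — i.e. $z_i$ is in the ``bottom'' layer relative to level $i$. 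Now for two indices $i<i'$: $w_i\in B_i$, while $w_{i'}\in B_{i'}$ with $i'\ge i+1$; these are distinct sets so $w_i\neq w_{i'}$. To separate $w_i$ from $z_{i'}$ and $z_i$ from $z_{i'}$, note $z_{i'}\notin \bigcup_{j\le i'}B_j$, whereas $w_i\in B_i$ with $i\le i'$, so $w_i\ne z_{i'}$; and $z_i$ is unmarked and by Corollary \ref{struct} applied at level $i'$ would — but this is the case that needs more care. The cleanest route is to invoke the edge/vertex tagging discipline: when \textsf{AugDFS(i',(w_{i'},z_{i'}))} descends into the recursive call on $(w_i,z_i)$ (with $i=i'-1$), the components $C^{i-1}$ at that moment determine where the path $\rho_{w_i,z_i}$ can live, and Corollary \ref{struct} confines it above layer $i-1$; combined with the fact that the edges $(w_j,z_j)$ for $j<i$ are produced by strictly deeper recursive calls that operate inside a single component $C_{w_i}^{i-2}$-side structure, their endpoints cannot coincide with $w_i$ or $z_i$. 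I expect this cross-level disjointness — ruling out $z_i = z_{i'}$ and $z_i = w_{i'}$ for $i<i'$ — to be the main obstacle, and it will likely require an auxiliary observation that along the recursion the ``active component'' only shrinks, so that vertices touched at level $i'$ and below are confined to a region disjoint from the layers $B_j$ with $j<i'-1$, while the endpoint $z_{i'}$ sits strictly below all of $B_0,\dots,B_{i'}$.

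Finally, the within-edge disjointness $w_i\ne z_i$ is immediate since $w_i\in B_i$ is marked (all of $S_k=B_0$ and all subsequently-added $B_j$ consist of marked vertices) while $z_i$ is unmarked. Assembling these, all $2(h+1)$ endpoints $w_1,\dots,w_{h+1},z_1,\dots,z_{h+1}$ are pairwise distinct, which is exactly the vertex-disjointness of the edge sequence, completing the proof. I would present the argument by induction on $h$, phrasing the inductive hypothesis as: the sequence returned by \textsf{AugDFS(i,(u,v))} on any valid input has its $w_j$'s in the respective $B_j$'s, its $z_j$'s (and $w$ at the deepest level) unmarked, and all endpoints confined to $V\setminus(\bigcup_{l\le j-2}B_l)$ for the edge at level $j$ — the last clause, supplied by Corollary \ref{struct}, being precisely what powers the disjointness step.
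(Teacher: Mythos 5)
The bookkeeping half of your proposal matches the paper: layer membership $w_i\in B_i$ and unmarkedness of the $z_i$'s and $w_{h+1}$ read off the code, the $w_i$'s are distinct because the layers $B_1,\dots,B_h$ are disjoint, and no $w_i$ ($1\le i\le h$) can coincide with an unmarked endpoint since those $w_i$ are marked. (One small inaccuracy: not every vertex of $B_0=S_k$ is marked --- only degree-$(k-1)$ vertices are marked at initialization --- but this does not hurt you, since the component test $C_{z}^{i-2}\neq C_{w}^{i-2}$ already forces each chosen $z$ to lie outside $B_0$.) The genuine gap is exactly the step you flag yourself: pairwise distinctness of the unmarked endpoints $w_{h+1},z_1,\dots,z_{h+1}$, i.e.\ ruling out $z_i=z_{i'}$ and $z_i\in\{u,v\}$. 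Neither of the routes you gesture at closes it: the tagging discipline is an efficiency/blocking device relevant to Lemma~\ref{incr}, not to disjointness of a single returned sequence, and observing that every unmarked endpoint lies below all of $B_0,\dots,B_h$ cannot separate the unmarked endpoints from one another. Without this step the lemma is unproved, since a priori nothing in the code forbids, say, choosing $z_h=u$.

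The paper closes this gap with a component-separation argument. Note that within one successful call \textsf{AugDFS($h+1,(u,v)$)} the tree is not modified, so all components refer to one fixed $\tree$. When $z_i$ is chosen, the test guarantees $C_{z_i}^{i-1}\neq C_{w_i}^{i-1}$. On the other hand, all previously chosen unmarked endpoints lie in the \emph{same} component of $\tree\setminus(\bigcup_{j\le i-1}B_j)$ as $w_i$: by the blocking property (Lemma~\ref{block}) the path $\rho_{u,v}$ avoids $\bigcup_{j\le h-1}B_j$, by Corollary~\ref{struct} each $\rho_{w_{j+1},z_{j+1}}$ avoids $\bigcup_{l\le j-1}B_l$, and $w_i\in\rho_{w_{i+1},z_{i+1}}$, so these paths chain $u,v,z_h,z_{h-1},\dots,z_{i+1}$ to $w_i$ inside one component of $\tree\setminus(\bigcup_{j\le i-1}B_j)$ (monotonicity --- marks only added, components only merged since \textsf{Layering} --- keeps these facts valid at selection time). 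Hence $z_i$ differs from all of them, and in fact all of $w_{h+1},z_1,\dots,z_{h+1}$ fall into pairwise distinct components of $\tree\setminus(\bigcup_{j=0}^h B_j)$; this is what the paper's proof records via the observation that a chosen pair satisfies $C_z^{i-2}\neq C_w^{i-2}$ while $C_z^{i-3}=C_w^{i-3}$ (otherwise \textsf{Layering} would have placed $w$ in $B_{i-2}$). Your induction could be repaired by adding exactly this invariant --- ``everything chosen so far lies in one component one level above the current one'' --- but as written the key claim is asserted rather than proved.
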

\begin{proof}
    The initial $u,v$ are unmarked. From the algorithm, when calling \textsf{AugDFS(i, (u,v))}, we find a $w\in \rho_{u,v}\cap B_{i-1}$ and $z$ is unmarked , so the corresponding $w_i\in B_i$ for $1\leq i\leq h$, and $w_{h+1}, z_1,\cdots z_{h+1}$ are unmarked, also the vertices $\{w_i|1\leq i\leq h\}$ are distinct. To see that $w_{h+1}, z_1,\cdots z_{h+1}$ are distinct, we argue that in one execution of \textsf{AugDFS(i, (u,v))}, $w$ and $z$ have $C_{z}^{i-2}\neq C_{w}^{i-2}$ but $C_{z}^{i-3}= C_{w}^{i-3}$, since if $C_{z}^{i-3}\neq C_{w}^{i-3}$, $w$ would be in $B_{i-2}$ by the algorithm \textsf{Layering}. Thus $w_{h+1}, z_1,\cdots z_{h+1}$ are in distinct components in $\tree \setminus (\bigcup_{i=0}^{h}B_i)$.
\end{proof}

\begin{lemma}\label{dfs-correctness}
In the \textsf{AugSeqDegRed} algorithm, \textsf{AugDFS(h+1, (u,v))} returns either null or an augmenting sequence.
\end{lemma}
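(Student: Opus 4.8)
The plan is to show that the edge sequence $p = (w_1,z_1),\dots,(w_{h+1},z_{h+1})$ returned by \textsf{AugDFS(h+1,(u,v))} satisfies both conditions (i) and (ii) of Definition~\ref{aug-def}, with $h$ in the definition replaced by $h+1$. Lemma~\ref{lem:disjoint} already hands us almost all of condition (ii): the vertices $w_1,\dots,w_h$ lie in $B_1,\dots,B_h$ respectively, hence are marked, while $w_{h+1},z_1,\dots,z_{h+1}$ are unmarked, and the edges are vertex-disjoint. So the real work is condition (i): exhibiting the vertex $w_0\in\rho_{w_1,z_1}\cap S_k$, and proving the nesting/chain property $w_i\in\rho_{w_{i+1},z_{i+1}}\setminus\bigl(\bigcup_{j=i+2}^{h+1}\rho_{w_j,z_j}\bigr)$ for every $0\le i\le h$.

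First I would unpack the recursion. By construction, when \textsf{AugDFS} is called at level $i$ with argument $(u,v)$, the chosen edge $(w,z)$ has $w\in\rho_{u,v}\cap B_{i-1}$ and $z$ unmarked with $C_z^{i-2}\neq C_w^{i-2}$; this edge becomes $(w_{i-1},z_{i-1})$ and the recursion proceeds on $(i-1,(w,z))$. So for the returned sequence we get immediately that $w_i\in\rho_{w_{i+1},z_{i+1}}$ for $0\le i\le h$ — that is, $w_i$ lies on the tree path of the \emph{next} edge — once we define $w_0$ as the vertex selected at the bottom call \textsf{AugDFS(1,(w_1,z_1))}: but wait, that call returns $(w_1,z_1)$ directly, so I actually need to look one level up. The cleaner route: the edge $(w_1,z_1)$ is selected inside \textsf{AugDFS(2,(w_2,z_2))}, so $w_1\in\rho_{w_2,z_2}\cap B_1$; and $w_1\in B_1$ means, by the rule of \textsf{Layering} defining $B_1$, that $w_1$ has an unmarked neighbour $z$ with $C_{w_1}^0\neq C_z^0$ — equivalently $\rho_{w_1,z}$ crosses $B_0=S_k$. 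I would take $w_0$ to be a vertex of $S_k$ on $\rho_{w_1,z_1}$: the pair $(w_1,z_1)$ itself has $C_{z_1}^{0}\ne C_{w_1}^{0}$ (since $C_{z_1}^1\ne C_{w_1}^1$, using the argument in Lemma~\ref{lem:disjoint} that $C_z^{i-2}\ne C_w^{i-2}$ at each level), so $\rho_{w_1,z_1}$ must contain a vertex of $B_0=S_k$. That gives $w_0\in\rho_{w_1,z_1}\cap S_k$ and $w_0\in\rho_{w_1,z_1}$, establishing the $i=0$ base case.

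The main obstacle, and the part deserving the most care, is the \emph{exclusion} clause $w_i\notin\rho_{w_j,z_j}$ for $j\ge i+2$, i.e.\ that earlier (lower-index) break vertices do not lie on the tree paths of later edges. This is exactly where Corollary~\ref{struct} does the heavy lifting: since $(w_j,z_j)$ is the edge chosen at level $j$, with $w_j\in B_j$ and $z_j$ unmarked and $C_{w_j}^{j-1}\ne C_{z_j}^{j-1}$, Corollary~\ref{struct} tells us $\rho_{w_j,z_j}$ contains only vertices of $V\setminus(\bigcup_{l=0}^{j-2}B_l)$; in particular it contains no vertex of $B_i$ when $i\le j-2$, and since $w_i\in B_i$ (for $1\le i\le h$) we conclude $w_i\notin\rho_{w_j,z_j}$. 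For $i=0$ we need $w_0\notin\rho_{w_j,z_j}$ for $j\ge 2$, which follows the same way because $w_0\in S_k=B_0$ and $\rho_{w_j,z_j}$ avoids $B_0$ (indeed avoids $B_0,\dots,B_{j-2}$). I would also remark that the exact choice of $w_0$ among the $S_k$-vertices on $\rho_{w_1,z_1}$ is irrelevant here since all of $B_0$ is excluded from $\rho_{w_j,z_j}$ for $j\ge 2$. Assembling: the chain property $w_i\in\rho_{w_{i+1},z_{i+1}}$ comes from the recursion, the exclusion from Corollary~\ref{struct}, condition (ii) from Lemma~\ref{lem:disjoint} plus $w_i\in B_i\Rightarrow$ marked, and vertex-disjointness from Lemma~\ref{lem:disjoint}; so $p$ is an augmenting sequence, and of course if no edge is ever found the algorithm returns null. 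One subtlety to double-check is that \textsf{AugDFS} can in principle pick a $w\in\rho_{u,v}\cap B_{i-1}$ at level $i$ whose subsequent recursive call fails and backtracks — but the returned sequence only records successful choices, so the structural guarantees above apply verbatim to whatever is actually returned.
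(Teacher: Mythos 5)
Your proposal is correct and follows essentially the same route as the paper's proof: property (\romannumeral2) via Lemma~\ref{lem:disjoint}, the existence of $w_0$ from $C^0_{w_1}\neq C^0_{z_1}$, and the exclusion clause via Corollary~\ref{struct}, with the chain membership coming directly from the recursion. One small remark: $C^0_{w_1}\neq C^0_{z_1}$ is simply the condition checked by \textsf{AugDFS} at the level-$2$ call (it is $C^{i-2}$ with $i=2$), so your parenthetical detour through $C^1_{w_1}\neq C^1_{z_1}$ is unnecessary (and as an implication it would go the wrong way).
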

\begin{proof}
Assume a sequence of edges $(w_1, z_1),\cdots,(w_{h+1},z_{h+1})$ is returned by \textsf{AugDFS(h+1, (u,v))}. Property (\romannumeral2) in Definition \ref{aug-def} is proved by Lemma~\ref{lem:disjoint}. Now let us focus on property (\romannumeral1). We can take an arbitrary $w_0\in \rho_{w_1, z_1}\cap B_0$ since $C^0_{w_1}\neq C^0_{z_1}$ by the algorithm. Also since $w_i\in B_i, \forall 0\leq i\leq h$, by Corollary \ref{struct} we know $\rho_{w_i, z_i}$ does not contain any $w_j, 0\leq j\leq i-2$, so property (\romannumeral2) holds.
\end{proof}

The following statement concludes the \textsf{AugSeqDegRed} algorithm will terminate quickly.
\begin{lemma}\label{incr}
In the \textsf{AugSeqDegRed} algorithm, $h$ is increased by at least one during each repeat-loop, except the last one.
\end{lemma}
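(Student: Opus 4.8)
The plan is to show that after one iteration of the repeat-loop terminates (i.e.\ after we have exhausted all length-$(h+1)$ augmenting sequences found via \textsf{AugDFS}), the next call to \textsf{Layering} must output a value $h' \geq h+1$, unless that next iteration is already the last one (i.e.\ $h' \geq 1 + \log_{1+\epsilon} n$). Equivalently, I want to prove that when the \textsf{For}-loop inside one repeat-iteration finishes, there is no longer any augmenting sequence of length $\leq h+1$ with respect to the current tree; then, since \textsf{Layering} reports the length of the shortest augmenting sequence, the new $h'$ satisfies $h' \geq h+2 > h$, so $h$ strictly increases. The key structural fact to exploit is the blocking property (Lemma~\ref{block}) together with the tagging discipline of \textsf{AugDFS}: once the \textsf{For}-loop over all candidate starting edges $(u,v)$ has completed, every vertex $w$ that could serve as a layer-$i$ vertex in a short augmenting sequence, and every candidate edge, has been tagged, so no route up the layers to $B_0$ survives.

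First I would argue the following invariant, maintained throughout the \textsf{For}-loop of a single repeat-iteration: whenever a vertex $w \in B_{i}$ gets tagged inside some call \textsf{AugDFS($i+1$, $\cdot$)}, then at that moment there is no untagged edge $(w,z)$ with $z$ unmarked and $C_z^{i-1} \neq C_w^{i-1}$ that can be extended (via \textsf{AugDFS($i$, $(w,z)$)}) to reach $B_0$; and because tree modifications (Lemma~\ref{aug-tree}) only merge the components $C_u^j$ and never split them (this is exactly the content of Lemma~\ref{block}), once $w$ fails to lead upward it continues to fail for the rest of the iteration — a previously merged component cannot later separate to create a new way up. Similarly, a tagged edge $(w,z)$ has already been explored and either succeeded (in which case it was used in a sequence and its endpoints' relevant structure is accounted for) or was dead-ended. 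I would then show by downward induction on the layer index $i = 1, 2, \ldots, h+1$ that, at the end of the \textsf{For}-loop, for every edge $(u,v)$ on "layer $i$" (meaning $u \in \rho$-reachable at layer $i$, concretely an edge examined when descending from layer $i$), the call \textsf{AugDFS($i$, $(u,v)$)} would now return null. The base case $i = 1$ needs care: \textsf{AugDFS($1$, $(u,v)$)} always returns $(u,v)$, so the real content is that no edge reaching layer $1$ with the required unmarked/component-separated endpoints remains untagged; this follows because every such edge was enumerated in the inner \textsf{For}-loop of some \textsf{AugDFS($2$, $\cdot$)} call and tagged.

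Granting that, the conclusion is immediate: after the \textsf{For}-loop, running \textsf{AugDFS($h+1$, $(u,v)$)} for every unmarked adjacent pair with $C_u^h \neq C_v^h$ returns null, which by Lemma~\ref{dfs-correctness} and the termination rule of \textsf{Layering} means the shortest augmenting-sequence length is now at least $h+2$; hence the next \textsf{Layering} call outputs $h' \geq h+1$ (in fact $\geq h+2$), so $h$ strictly increases — unless $h' \geq 1 + \log_{1+\epsilon} n$, in which case the repeat-loop halts and this was the last iteration, which is the stated exception. I expect the main obstacle to be the bookkeeping in the inductive step: making precise that the tagging done across the many different \textsf{AugDFS} calls within one repeat-iteration collectively covers every possible short augmenting sequence in the final tree, despite the fact that the tree (and hence the components $C_u^j$, though never the blocking structure) changes between successive successful \textsf{AugDFS} calls. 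The blocking property (Lemma~\ref{block}) and Corollary~\ref{struct} are the tools that let one assert that no augmenting sequence can "jump" past the $B_j$ layers or revive a dead route, and I would lean on them heavily to close that gap.
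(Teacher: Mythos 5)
Your overall strategy is the same as the paper's: argue that by the end of one repeat-iteration no length-$(h+1)$ augmenting sequence survives, using the fact that the components $C_u^i$ only merge (so tags never become stale and a dead route cannot revive), and conclude that the next call to \textsf{Layering} must output a strictly larger value of $h$ unless the loop terminates. The genuine gap sits exactly where you yourself flag ``the main obstacle'': you never establish the completeness statement that every augmenting sequence in the sense of Definition~\ref{aug-def} --- whose definition makes no reference to the layers --- is in fact confined to the layer structure that \textsf{AugDFS} explores. Your downward induction only yields that all the \textsf{AugDFS} calls return null at the end of the for-loop; by itself (even together with Lemma~\ref{dfs-correctness}, which is only the soundness direction) this does not rule out a length-$(h+1)$ augmenting sequence whose intermediate vertices $w_i$ are marked vertices lying outside the sets $B_i$, or in the ``wrong'' layer, and hence invisible to the depth-first search. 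The paper closes precisely this hole with a short characterization argument: by induction, any augmenting sequence $(w_1,z_1),\dots,(w_l,z_l)$ satisfies $w_i\in\bigcup_{j=0}^{i}B_j$, and the blocking property (Lemma~\ref{block}) forces $\rho_{w_l,z_l}\cap\bigcup_{j=0}^{h-1}B_j=\emptyset$, so $l\geq h+1$, and when $l=h+1$ necessarily $w_i\in B_i$ for every $i$ and the last edge joins two unmarked vertices in different components of $\tree\setminus(\bigcup_{i=0}^{h}B_i)$. Only with this characterization does ``every search fails'' translate into ``no length-$(h+1)$ augmenting sequence exists,'' which is what makes $h$ increase. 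Your appeal to Lemma~\ref{block} and Corollary~\ref{struct} is the right instinct, but the inductive layer-confinement argument itself is the substantive missing step, not mere bookkeeping.

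Two smaller points. First, from ``no augmenting sequence of length $\leq h+1$ remains'' you can only conclude that the next \textsf{Layering} outputs $h'\geq h+1$, not $h'\geq h+2$ as your parenthetical claims (a shortest sequence of length exactly $h+2$ gives $h'=h+1$); this does not hurt the lemma, since $h'\geq h+1$ suffices. Second, when you argue that a tagged vertex or edge stays useless, make sure to separate the two reasons the paper uses: a tagged edge either led to a successful augmentation (after which its endpoints' components are merged, so the crossing condition fails forever) or to a failed search (which stays failed because components only merge); and a vertex is tagged only after all of its candidate edges have been exhausted, so future searches through it are also doomed --- this is the content of the paper's claims (1) and (2), including the observation that after a null return all of $\rho_{u,v}\cap B_h$ is tagged.
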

\begin{proof}
By the rules of \textsf{Layering}, it is easy to see that at the beginning when \textsf{Layering} outputs $B_0, B_1, \cdots, B_h$, the shortest length of augmenting sequence is equal to $h+1$. So it suffices to prove that by the end of this iteration the shortest augmenting sequence has length $>h+1$.

First we need to characterize all augmenting sequences using $B_0, B_1, \cdots, B_h$. Let the sequence $(w_1, z_1),$ $(w_2, z_2), \cdots, (w_l, z_l)$ be an arbitrary augmenting sequence and let $w_0$ be the $B_0$-vertex on $\rho_{w_1, z_1}$. We argue that $l\geq h+1$, and more importantly, if $l = h+1$, it must be that $w_i\in B_i, \forall 0\leq i\leq h$. We inductively prove that $w_i\in \bigcup_{j=0}^i B_j$ for $i = 0, 1, \cdots, l-1$. The basis is obvious as is required by property (\romannumeral1) in Definition \ref{aug-def}. Now assume $w_i\in B_r$ for some $r\leq i$. Then, from algorithm \textsf{Layering}, it would not be hard to see $w_{i+1}\in \bigcup_{j=0}^{r+1} B_j\subseteq \bigcup_{j=0}^{i+1} B_j$. Now, since components $\{C^r_u\}$ for $r\leq h-1$ are not connected by edges whose both endpoints are unmarked by Lemma~\ref{block}, so $\rho_{w_l, z_l}\cap \bigcup_{j=0}^{h-1} B_j = \emptyset$, and on the other hand $w_{l-1}\in \rho_{w_l, z_l}\cap\bigcup_{j=0}^{l-1} B_j$, so $l\geq h+1$. Plus, we can see from the induction that, when $l = h+1$ it must be that $w_i\in B_i, \forall 0\leq i\leq h$.

For any unmarked and adjacent vertices $u, v$ such that $C_u^h \neq C_v^h$, consider the instance of \textsf{AugDFS} with input $(h+1,(u,v))$. We make two claims.
\begin{enumerate}[(1)]
	\item If there is an $(h+1)$-length augmenting sequence ending with $(u, v)$, \textsf{AugDFS} would succeed in finding one.
	\item If it has returned null, then there would be no $(h+1)$-augmenting sequence ending with $(u, v)$ throughout the entire repeat-loop iteration.
\end{enumerate} 

If (1), (2) can be proved, then by the end of this repeat-loop iteration, there would be no $(h+1)$-length augmenting sequences because such augmenting sequence should end with a pair of adjacent unmarked vertices. Next we come to prove (1), (2).

\begin{enumerate}[(1)]
	\item The depth-first search of \textsf{AugDFS} exactly coincides with the conditions that $w_i\in B_i$, except that it skips all tagged vertices and edges. Now we prove that omitting tagged vertices and edges does not miss any $(h+1)$-length augmenting sequences. For an edge $(w, z)$ to be tagged, either a further recursion \textsf{AugDFS} has succeeded or failed in finding an augmenting sequences; in the former case, $C_w^{i-2}$ and $C_z^{i-2}$ has been merged, and so the condition $C_w^{i-2}\neq C_z^{i-2}$ would be violated afterwards; in the latter case, we would not need to recur on $(w, z)$ since the components w.r.t. $B_0,\cdots,B_{i-2}$ also can only merge. For a vertex $w$ to be tagged, we must have enumerated all of its untagged edges $(w, z)$ but failed to find any augmenting sequences, and therefore any future depth-first searches on $w$ would still end up in vain. 
	
	\item If \textsf{AugDFS} has once failed to find any augmenting sequences starting with $(u, v)$, then all vertices $w\in \rho_{u, v}\cap B_h$ visited by this instance of \textsf{AugDFS} should be tagged and they would be omitted by all succeeding instances of \textsf{AugDFS}. Therefore $\rho_{u, v}\cap B_h$ would stay unchanged since then. 
	Hence, if we re-run \textsf{AugDFS} with $h+1, (u, v)$, it will return null without any recursion because all vertices in $\rho_{u,v}\cap B_h$ are tagged.
\end{enumerate}
\end{proof}

Suppose \textsf{AugSeqDegRed} has terminated with $B_0, B_1, \cdots, B_{\lceil\log_{1+\epsilon}n + 1\rceil}$. We introduce the notion of a \emph{clean component}, a sequence of disjoint vertex subsets, and apply Lemma \ref{undir-witness} to get the lower bound on $\opt$.

\begin{defn}
After an instance of \textsf{AugSeqDegRed} has been executed, for any vertex $u\in V\setminus (\bigcup_{i=0}^h B_i)$, an arbitrary component $C_u^h, 0\leq h\leq \lceil\log_{1+\epsilon}n + 1\rceil$ is called \emph{clean} if all vertices in $C_u^h$ are unmarked.
\end{defn}

\begin{lemma}\label{aug-lower}
For any $0\leq h< \lceil\log_{1+\epsilon}n + 1\rceil$, suppose $\tree \setminus (\bigcup_{i=0}^h B_i)$ has $l$ clean components, then a lower bound holds that $\opt \geq \frac{l-1}{\sum_{i=0}^{h+1}|B_i|}$.
\end{lemma}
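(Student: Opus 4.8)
The plan is to apply Lemma~\ref{undir-witness} with a carefully chosen sequence of disjoint vertex subsets and a boundary set of the right size. The natural candidates for the disjoint subsets $V_1,\dots,V_l$ are exactly the $l$ clean components of $\tree\setminus(\bigcup_{i=0}^h B_i)$, and the natural candidate for the boundary set $W$ is $\bigcup_{i=0}^{h+1}|B_i|$, i.e.\ $W = B_0\cup B_1\cup\cdots\cup B_{h+1}$. Once we verify that $W$ is indeed a boundary set with respect to these $l$ clean components, Lemma~\ref{undir-witness} immediately yields $\opt\geq \frac{l-1}{|W|}\geq \frac{l-1}{\sum_{i=0}^{h+1}|B_i|}$, which is the claimed bound.

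So the substance of the proof is the verification that $W=\bigcup_{i=0}^{h+1}B_i$ is a boundary set, i.e.\ that every boundary edge of the clean-component family has at least one endpoint in $W$. First I would take an arbitrary boundary edge $(x,y)$: by Definition~\ref{def:boundary} this means either $x,y$ lie in two different clean components, or $x$ lies in a clean component while $y$ lies in no clean component at all. Suppose for contradiction that neither $x$ nor $y$ is in $W=\bigcup_{i=0}^{h+1}B_i$. I would first handle the sub-case where $x$ is in a clean component $C$ of $\tree\setminus(\bigcup_{i=0}^h B_i)$ and $y\notin W\supseteq \bigcup_{i=0}^h B_i$, so that $y$ also belongs to some component $C_y^h$ of $\tree\setminus(\bigcup_{i=0}^h B_i)$, with $C_x^h\neq C_y^h$ (they are different because the edge is a boundary edge and $x,y$ are in different components — if $y$ were in the same component of $\tree\setminus(\bigcup_{i=0}^h B_i)$, then $y$ would be in the same clean component as $x$, contradicting that $(x,y)$ is a boundary edge). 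Since $C$ is clean, $x$ is unmarked; $y\notin B_{h+1}$ by assumption. This is precisely the scenario where the termination condition of \textsf{Layering} and Lemma~\ref{block} apply.

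For the key step, recall how \textsf{AugSeqDegRed} terminated: the repeat-loop ended because $h\geq 1+\log_{1+\epsilon}n$, which means \textsf{Layering} ran to completion for this value of $h$ without hitting its \textbf{break} (otherwise it would have reported a shorter augmenting sequence and the loop would continue), so in particular there are no two unmarked adjacent vertices $u,v$ with $C_u^h\neq C_v^h$. Hence $y$ must be marked (since $x$ is unmarked and $C_x^h\neq C_y^h$). But then Lemma~\ref{block}, applied with index $h$ — wait, Lemma~\ref{block} is stated for $1\le i<h$; here I would instead appeal directly to the rule of \textsf{Layering} defining $B_{h+1}$: it sets $B_{h+1}$ to be the set of all marked vertices $u\in V\setminus(\bigcup_{j=0}^h B_j)$ having an unmarked neighbour $v$ with $C_u^h\neq C_v^h$. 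Since $y$ is marked, $y\notin\bigcup_{j=0}^h B_j$ (as $y\notin W$), $x$ is an unmarked neighbour of $y$, and $C_x^h\neq C_y^h$, the rule forces $y\in B_{h+1}\subseteq W$, contradicting $y\notin W$. The remaining sub-case — where $y$ does lie in no clean component yet $y\notin\bigcup_{j=0}^h B_j$, meaning $C_y^h$ is a non-clean (dirty) component — is handled identically: $y$ still has the unmarked neighbour $x$ with $C_x^h\neq C_y^h$, so if $y$ is marked the same \textsf{Layering} rule puts $y\in B_{h+1}$, and if $y$ is unmarked then $x,y$ are two unmarked adjacent vertices in different $C^h$-components, contradicting that \textsf{Layering} did not break. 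The case where both $x,y$ lie in (different) clean components is the special case where both are unmarked, already excluded. This establishes that $W$ is a boundary set, and the lemma follows.

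The main obstacle I anticipate is bookkeeping the index range carefully: Lemma~\ref{block} as stated only covers $1\le i<h$, so for the critical endpoint-at-layer-$h$ argument I need to fall back on the explicit definition of $B_{h+1}$ from the \textsf{Layering} algorithm rather than on Lemma~\ref{block} directly, and I must make sure that the clause ``$B_{h+1}$ is computed at all'' is justified by the fact that the \textbf{break} was not triggered for this $h$ — which in turn follows from the termination condition $h\geq 1+\log_{1+\epsilon}n$ of the outer repeat-loop rather than from a \textbf{break}. I would also double-check that tree modifications performed during the iteration cannot retroactively split a clean component or make a marked vertex unmarked (they cannot: modifications only merge $C^i$-components and marks are permanent), so the boundary-set property, once true at the end of the final \textsf{Layering} call, persists through to the point where Lemma~\ref{aug-lower} is invoked.
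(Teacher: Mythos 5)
Your proposal is correct and follows essentially the same route as the paper: take the clean components as the disjoint subsets, show $\bigcup_{i=0}^{h+1}B_i$ is a boundary set for them, and invoke Lemma~\ref{undir-witness}. The only cosmetic difference is that you re-derive the needed blocking step at level $h$ directly from the rule defining $B_{h+1}$ in \textsf{Layering} (plus monotonicity of marks and component merges), whereas the paper simply cites Lemma~\ref{block}; the content is the same.
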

\begin{proof}
Since $h<\lceil\log_{1+\epsilon}n + 1\rceil$, $B_h$ is not the last one, so there is no edge connecting two unmarked vertices in different components of $\tree \setminus (\bigcup_{i=0}^h B_i)$. By Lemma~\ref{block}, any edge that connects a clean components of $\tree \setminus (\bigcup_{i=0}^h B_i)$ outwards must be incident on a vertex in $\bigcup_{i=0}^{h+1} B_i$, so $\bigcup_{i=0}^{h+1} B_i$ is a boundary set w.r.t. clean components. Therefore by Lemma \ref{undir-witness} we have $\opt \geq \frac{l-1}{|\bigcup_{i=0}^{h+1} B_i|} = \frac{l-1}{\sum_{i=0}^{h+1}|B_i|}$
\end{proof}


\begin{lemma}\label{running-time1}
    There is an implementation of \textsf{AugSeqDegRed} that runs in $O(\frac{1}{\epsilon^2}m\log^2 n)$ time.
\end{lemma}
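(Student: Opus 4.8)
The plan is to bound the total running time of one call to \textsf{AugSeqDegRed(k)} by separately accounting for (a) the number of iterations of the outer repeat-loop, (b) the cost of one \textsf{Layering} call, and (c) the total cost of all \textsf{AugDFS} calls within one repeat-loop iteration. By Lemma~\ref{incr}, $h$ increases by at least one in every iteration except the last, and the loop exits once $h\geq 1+\log_{1+\epsilon}n$; since $\log_{1+\epsilon}n = O(\frac{1}{\epsilon}\log n)$, there are $O(\frac{1}{\epsilon}\log n)$ repeat-loop iterations. So it suffices to show each iteration costs $O(\frac{1}{\epsilon}m\log n)$.

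First I would handle \textsf{Layering}. Each of the at most $1+\log_{1+\epsilon}n = O(\frac{1}{\epsilon}\log n)$ layers requires recomputing the forest $\{C_u^h\}$ spanned by $\tree\setminus(\bigcup_{i\le h}B_i)$ and scanning all edges to test whether some edge joins two unmarked vertices in distinct components or to build $B_{h+1}$; this is $O(m)$ per layer using a union-find / connected-components pass, so one \textsf{Layering} call is $O(\frac{1}{\epsilon}m\log n)$ — actually this is already charged once per repeat-loop iteration, which is fine. The membership queries $C_x^{i}$ versus $C_y^{i}$ used throughout \textsf{AugDFS} can be answered in (near) constant time by precomputing, during \textsf{Layering}, component labels for every relevant level, and maintaining them under the merges caused by tree modifications with a union-find structure; this contributes the inverse-Ackermann or $\log n$ factor hidden in the statement. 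The queries of the form ``is $w\in\rho_{u,v}\cap B_{i-1}$'' can be supported by a standard link-cut / Euler-tour structure on $\tree$ that reports, or iterates over, the $B_{i-1}$-vertices on a tree path; each tree modification in Lemma~\ref{aug-tree} performs $O(h)=O(\frac1\epsilon\log n)$ link/cut operations, each $O(\log n)$.

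The heart of the argument is step (c): bounding the total work of all \textsf{AugDFS} calls in a single repeat-loop iteration. The key is the tagging invariant established in the proof of Lemma~\ref{incr}: once an edge $(w,z)$ is processed it is tagged forever within this iteration, and once a vertex $w$ is exhausted it is tagged forever within this iteration; tagged vertices and edges are skipped by every later \textsf{AugDFS}. Hence, across the whole iteration, every edge is the ``current edge $(w,z)$'' of a recursive \textsf{AugDFS} frame at most once, and every vertex is the loop variable $w$ of a \textsf{for}-loop at most once; therefore the cumulative number of frame-openings and inner-loop iterations over all top-level \textsf{AugDFS} calls is $O(m)$. Each such unit of work costs $O(\log n)$ (for the path / component data-structure query it performs), and each of the $O(m/h)$ successful searches additionally triggers one tree modification costing $O(h\log n)$, for a total of $O(m\log n)$ extra; so one repeat-loop iteration is $O(m\log n)$ plus the $O(\frac1\epsilon m\log n)$ \textsf{Layering} cost. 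Multiplying by the $O(\frac1\epsilon\log n)$ iterations gives $O(\frac{1}{\epsilon^2}m\log^2 n)$.

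The main obstacle I anticipate is making the ``each edge and each vertex is touched once per iteration'' amortization fully rigorous in the presence of the recursive structure of \textsf{AugDFS}: one must argue that re-entering \textsf{AugDFS} at level $i$ from different callers never re-scans already-tagged vertices or edges, that a failed recursion correctly tags everything it examined, and that the union-find component labels are updated consistently so that the $C_\cdot^{\cdot}$ tests remain valid after each modification (this is exactly where Lemma~\ref{block}'s "merge-only, never split" property is used). The secondary technical point is choosing data structures on $\tree$ that simultaneously support path queries restricted to a layer $B_{i-1}$ and the batch of link/cut updates per modification within the claimed $O(\log n)$-per-operation budget; a link-cut tree augmented with per-layer subtree counts suffices, and the details are routine but need to be spelled out in the appendix.
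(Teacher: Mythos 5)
Your proposal is correct and follows essentially the same route as the paper's appendix proof: bound the number of repeat-loop iterations by Lemma~\ref{incr}, charge each \textsf{Layering} call $O(\frac{1}{\epsilon}m\log n)$, amortize all \textsf{AugDFS} work against the per-iteration tags on edges and vertices, and implement path enumeration and tree modifications with link-cut trees plus union-find for the component tests, so that \textsf{Layering}'s $O(\frac{1}{\epsilon^2}m\log^2 n)$ cost dominates. The only nitpick is your claim that each vertex is the loop variable $w$ at most once per iteration: a successful recursion returns without tagging $w$, so $w$ may recur, but each such occurrence can be charged to the edge tagged in that call, leaving the stated bound unaffected.
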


\begin{proof}
We discuss some implementation details of \textsf{Layering}, \textsf{AugDFS} and \textsf{AugSeqDegRed} separately, and analyse their contributions to the total running time in a single run of \textsf{AugSeqDegRed}.

\begin{enumerate}[(1)]
	\item \textsf{Layering}.
	
	For every instance of \textsf{Layering}, computing the forest $\{C_u^{h}\}_{u\in V\setminus (\bigcup_{i=0}^{h}B_i)}$ can be done in a single pass of breath-first search which takes $O(m)$ time. Computing $B_{h+1}$, if necessary, is easily done by scanning the edge set $E$ which also takes $O(m)$ time. As the while-loop iterates for at most $1+\log_{1+\epsilon} n$ times, and due to Lemma \ref{incr} \textsf{Layering} is invoked for at most $1+\log_{1+\epsilon} n$ times, the overall contribution of \textsf{Layering} is $O(\frac{1}{\epsilon^2}m\log^2 n)$.
	
	\item \textsf{AugSeqDegRed}.
	
	Excluding the contributions of \textsf{AugDFS} and \textsf{Layering}, all \textsf{AugSeqDegRed} does is simply un-tagging all vertices and edges, scanning the edge set $(u, v)\in E$ and deciding if $C_u^h \neq C_v^h$ as well as modifying tree $\tree$. As tree components only get merged and never split, we can use the union-find data structure~\cite{Tarjan75} to support querying whether $C_u^h \neq C_v^h$ in $O(\alpha(n))$ time. Every tree modification involves insertions and deletions of $O(\frac{1}{\epsilon}\log n)$ edges, as well as merging $O(\frac{1}{\epsilon}\log n)$ pairs of some tree components $C_u^i$. Using the link-cut tree, every edge insertion and deletion takes update time $O(\log n)$, and every component-merging takes time $O(\alpha(n))$. Since every tree modification merges two components in $\tree\setminus S_k$, there can be at most $O(n)$ tree modifications throughout \textsf{AugSeqDegRed}. Therefore, the overall contribution of tree modifications is $O(\frac{1}{\epsilon}n\log^2 n)$. Hence, \textsf{AugSeqDegRed}'s exclusive contributions to the total running time would be $O(\frac{1}{\epsilon}m\alpha(n)\log n + \frac{1}{\epsilon}n\log^2 n)$.
	
	\item \textsf{AugDFS}.
	
	Now we analyze the overall time complexity induced by \textsf{AugDFS} invoked on line-5 of \textsf{AugSeqDegRed}. There are two technical issues to be resolved.
	\begin{enumerate}[(a)]
		\item How to enumerate untagged vertices $\in \rho_{u,v}\cap B_{i-1}$?
		
		For each $u_i \in B_i, \forall 0\leq i\leq h$, assign $u_i$ a weight of $i$; vertices that do not belong to any $B_i$ have weight $h+1$. By Corollary \ref{struct}, to enumerate vertices $\in \rho_{u,v}\cap B_{i-1}$, it suffices to enumerate the lightest vertices on $\rho_{u,v}$, which can be done using a link-cut tree data structure \cite{sleator1985self} built on $\tree$, each enumeration taking $O(\log n)$ amortized time. When a vertex gets tagged, we change its weight to $h+1$, and so future enumerations on $\rho_{u,v}\cap B_{i-1}$ may skip this tagged vertex.
		
		\item How to enumerate unmarked $z$ connected by an untagged edge $(w, z)$ such that $C_z^{i-2}\neq C_w^{i-2}$?
		
		Each $w$ decrementally maintains a list of all its neighbours. While we scan the list, if the next edge $(w, z)$ satisfies both conditions that $C_z^{i-2}\neq C_w^{i-2}$ and $z$ is unmarked, then the algorithm starts a new iteration and recur; either way we cross the edge $(w, z)$ off the list. In this way, every edge appears for at most once. Thus the total time of this part is $O(m\alpha(n))$; the additional $\alpha(n)$ factor comes from the union-find data structure that helps deciding if $C_z^{i-2}\neq C_w^{i-2}$.	
	\end{enumerate}
	
	Note that (a)'s running time is always dominated by (b)'s, then the overall complexity of \textsf{AugDFS} is $O(m\alpha(n) + \frac{1}{\epsilon}n\log^2 n)$.
\end{enumerate}

Summing up (1)(2)(3), the total running time is dominated by the time complexity of \textsf{Layering} which is $O(\frac{1}{\epsilon^2}m\log^2 n)$.
\end{proof}


\subsection{Large-step phase and small-step phase}\label{sec:main-step}

The large-step phase and small-step phase are described in the \textsf{ImprovedMDST} algorithm~\ref{improve}. In the large-step phase, we deal with the case $\Delta \geq \frac{10\log^2 n}{\epsilon^3}$. It works by invoking \textsf{AugSeqDegRed} with an incremental parameters $k$ from $(1-2\epsilon)\Delta + 1$ if $d_{k-1}\le 2d_k$. Within each iteration, if \textsf{AugSeqDegRed} fails to reduce $d_k$ by a factor of $(1 - \frac{\epsilon^2}{2\log n})$, then the algorithm reports a lower bound on $\opt$ and returns $\tree$ immediately. Otherwise, increase $k$ by $1$ and continue until $d_k$ becomes $0$. Since $d_{k+1}\leq d_k$, $d_k$ will become $0$ in at most $O(\log^2 n/\epsilon^2)$ iterations. Once $d_k = 0$, $\Delta$ must have decreased and repeat the while-loop. (Note that by Lemma~\ref{aug-tree}, $\Delta$ cannot increase during the whole algorithm.)

In the small-step phase, we only deal with $\frac{9\log n}{\epsilon^2}\leq \Delta < \frac{10\log^2 n}{\epsilon^3}$. Set $c = 2(\log_{1+\epsilon}n+2)$ and define a potential:
$$\phi(\tree) = \sum_{i = \Delta +1 - \log n}^\Delta c^i\cdot |N_i|$$
The small-step phase works by repeatedly selecting a degree $k$ that maximizes $c^k\cdot |N_k|$ and then run \textsf{AugSeqDegRed(k)} until $\Delta$ decreases. Similar with the large-step phase, if \textsf{AugSeqDegRed(k)} fails to reduce $d_k$ significantly, then the algorithm reports a lower bound on $\opt$ and returns $\tree$ immediately. Clearly $k$ must be larger than $\Delta - \log n$.

\begin{algorithm}\label{improve}
	\caption{\textsf{ImprovedMDST}}
	Let $\tree$ be a spanning tree of $G$ with tree degree $\Delta$\;
	\tcc{Large-step phase}
	\While{$\Delta \ge \frac{10\log^2 n}{\epsilon^3}$}{
		$k = (1-2\epsilon)\Delta + 1$\;
		\While{$d_k > 0$}{
			\If{$d_{k-1}\le 2d_k$}{
				$d\leftarrow d_{k}$\;
				run \textsf{AugSeqDegRed(k)}\;
				\If{$d_k > (1 - \frac{\epsilon^2}{2\log n})\cdot d$}{
					\Return $\tree$\;
				}
			}
			$k=k+1$\;
		}
		update the tree degree $\Delta$\;
	}
	\tcc{Small-step phase}
	\While{$\Delta\ge \frac{9\log n}{\epsilon^2}$}{
		\While{$\Delta$ has not changed}{
			pick a $k\in\arg\max_{i\in [\Delta+1 - \log n, \Delta]}\{c^i\cdot |N_i|\}$\;
			$d \leftarrow d_k$\;
			run \textsf{AugSeqDegRed($k$)}\;
			\If{$d_k > (1 - \frac{\epsilon^2}{2\log n})\cdot d$}{
				\Return $\tree$\;
			}
		}
		update the tree degree $\Delta$\;
	}
	\Return $\tree$\;
\end{algorithm}

\subsection*{Running time}
In the large-step phase, every iteration $d_k$ shrinks by a factor of $\leq (1 - \frac{\epsilon^2}{2\log n})$, so $d_k$ will become zero in $O(\log^2 n/\epsilon^2)$ iterations. We have:
\begin{lemma}\label{runtime1}
	The running time of the large-step phase is bounded by $O(\frac{1}{\epsilon^5}m\log^5 n)$.
\end{lemma}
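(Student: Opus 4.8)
The plan is to bound the running time of the large-step phase by multiplying three quantities: the cost of a single \textsf{AugSeqDegRed} call, the number of \textsf{AugSeqDegRed} calls within one pass over the incremental parameter $k$, and the number of times the outer \textbf{while}-loop (the one conditioned on $\Delta \geq \frac{10\log^2 n}{\epsilon^3}$) executes. The first quantity is $O(\frac{1}{\epsilon^2}m\log^2 n)$ by the running-time lemma for \textsf{AugSeqDegRed} stated just above. The second I would control by the observation already flagged in the text: starting from $k = (1-2\epsilon)\Delta+1$, each successful iteration either skips (when $d_{k-1} > 2d_k$) or shrinks $d_k$ by a factor $(1 - \frac{\epsilon^2}{2\log n})$ before incrementing $k$; since $d_k$ is monotone non-increasing in $k$ and $d_{(1-2\epsilon)\Delta+1} \le n\Delta \le n^2$, and since it must drop to $0$, the number of \emph{shrinking} iterations is $O(\frac{\log n}{\epsilon^2}\cdot \log n) = O(\frac{\log^2 n}{\epsilon^2})$ — roughly $\log_{1/(1-\epsilon^2/(2\log n))} n^2 \approx \frac{2\log n \cdot 2\log n}{\epsilon^2}$. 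The skipping iterations need separate accounting, which is the subtle point (see below).

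For the third quantity, I would argue that each completed execution of the inner \textbf{while}-loop (the $d_k>0$ loop) that does \emph{not} trigger an early \textbf{return} ends with $d_k = 0$, hence $S_k = \emptyset$, hence the new $\Delta$ satisfies $\Delta \le k-1 < $ the old $k = (1-2\epsilon)\Delta_{\mathrm{old}}+1$, so $\Delta$ decreases geometrically by a factor $(1-2\epsilon)$ each outer iteration. Therefore the outer loop runs $O(\log_{1/(1-2\epsilon)} n) = O(\frac{1}{\epsilon}\log n)$ times (using $\epsilon < \frac{1}{6}$ so that $1/(1-2\epsilon)$ is bounded away from $1$ with $\log\frac{1}{1-2\epsilon} = \Theta(\epsilon)$). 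Multiplying: $O(\frac{1}{\epsilon}\log n) \cdot O(\frac{\log^2 n}{\epsilon^2}) \cdot O(\frac{1}{\epsilon^2}m\log^2 n) = O(\frac{1}{\epsilon^5}m\log^5 n)$, matching the claim. I would also note that the single early-return branch contributes at most one extra full pass, which is absorbed.

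The main obstacle is bounding the number of \emph{skipped} iterations, i.e. those where $d_{k-1} > 2d_k$ and we increment $k$ without calling \textsf{AugSeqDegRed}. A priori there could be up to $\Delta = O(\frac{\log^2 n}{\epsilon^3})$ such values of $k$ in one pass, which would be fine anyway (this only adds $O(1)$ cheap work per skipped $k$, i.e. $O(\frac{\log^2 n}{\epsilon^3})$ total per pass, dominated by the \textsf{AugSeqDegRed} cost). But to keep the stated bound clean I would instead observe that $d_{k-1} > 2d_k$ can happen for at most $\log(n\Delta) = O(\log n)$ distinct thresholds $k$ in a single increasing pass, since along such a pass the sequence $d_k$ at least halves each time this event fires and stays bounded below by $1$ until it hits $0$; so the skips are even fewer than the shrinking iterations, and the bottleneck is genuinely the $O(\frac{\log^2 n}{\epsilon^2})$ calls to \textsf{AugSeqDegRed}. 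Finally I would double-check the edge case $k$ reaching a value with $d_k$ already $0$ at entry, and the invariant that $\Delta$ is correctly updated to $k$ (the largest-degree value after $S_k$ empties), both of which follow directly from the definitions of $d_k$ and $\Delta$.
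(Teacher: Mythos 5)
Your decomposition (per-call cost of \textsf{AugSeqDegRed}, number of calls per pass of the inner loop, number of outer iterations) is the same as the paper's, and your $O(\frac{\log^2 n}{\epsilon^2})$ bound on calls per pass is essentially the paper's argument. However, your justification of the outer-loop count contains a genuine error. You claim that when the inner loop exits with $d_k=0$, the new $\Delta$ satisfies $\Delta \le k-1 < (1-2\epsilon)\Delta_{\mathrm{old}}+1$, i.e.\ that $\Delta$ shrinks by a factor $1-2\epsilon$ per outer iteration. This confuses the initial value of $k$ with its final value: the algorithm sets $\Delta$ to the value of $k$ at which the inner loop exits, and by then $k$ has been incremented once per inner iteration, so all one gets is $\Delta_{\mathrm{new}} \le (1-2\epsilon)\Delta_{\mathrm{old}} + 1 + (\text{number of inner iterations})$; vertices with degree between $(1-2\epsilon)\Delta_{\mathrm{old}}$ and the final $k-1$ may well survive the pass, so the factor $1-2\epsilon$ is simply not available. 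The paper instead bounds the total number of inner iterations (skips included) by $\frac{2\log^2 n}{\epsilon^2}$, so the final $k$ is at most $(1-2\epsilon)\Delta_{\mathrm{old}} + \frac{2\log^2 n}{\epsilon^2}$, and only then uses the large-step precondition $\Delta_{\mathrm{old}} \ge \frac{10\log^2 n}{\epsilon^3}$ --- which your decrease argument never invokes, a telltale sign --- to absorb the additive term and conclude $\Delta_{\mathrm{new}} \le (1-\epsilon)\Delta_{\mathrm{old}}$, whence $O(\frac{1}{\epsilon}\log n)$ outer iterations.

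Your own counts nearly repair this, since final $k$ equals initial $k$ plus the number of shrinking iterations plus the number of skipped iterations, and you bound the shrinking ones by $O(\frac{\log^2 n}{\epsilon^2})$. But your $O(\log n)$ bound on skips via a ``halving chain'' is itself shaky: between two skip events, intermediate calls \textsf{AugSeqDegRed}$(k')$ with a larger parameter can create new vertices of degree $k'-1$ and thereby increase $d_j$ for $j\le k'-1$, so the quantities compared at successive skips are not monotone over time. The clean fix (and the paper's route) treats skips and non-returning calls uniformly: writing $D_k$ for the value of $d_k$ at the end of the iteration with parameter $k$, a skip gives $D_k < \frac{1}{2} D_{k-1}$ and a non-returning call gives $D_k \le (1-\frac{\epsilon^2}{2\log n})D_{k-1}$, so every inner iteration multiplies $D$ by at most $1-\frac{\epsilon^2}{2\log n}$; since $D$ starts at most $2(n-1)$ and the loop ends once $D<1$, the total number of inner iterations --- and hence the growth of $k$ --- is $O(\frac{\log^2 n}{\epsilon^2})$. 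With that replacement your argument coincides with the paper's and the $O(\frac{1}{\epsilon^5}m\log^5 n)$ bound follows.
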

\begin{proof}
	From the previous subsection we already know that \textsf{AugSeqDegRed} runs in $O(\frac{1}{\epsilon^2}m\log^2 n)$ time, so here we only need to upper bound the total number of times \textsf{AugSeqDegRed} gets invoked before $\Delta < \frac{10\log^2 n}{\epsilon^3}$ or a spanning tree $\tree$ is returned within a while-loop. Next we only focus on the previous cases because it takes a longer running time. In this case, at the end of each iteration, $d_k \leq (1-\frac{\epsilon^2}{2\log n})\cdot d$. The inside while-loop would break when $k > (1-2\epsilon)\Delta + \frac{2\log^2 n}{\epsilon^2}$ because by the time $$d_k \leq \left(1-\frac{\epsilon^2}{2\log n}\right)^{\frac{2\log^2 n}{\epsilon^2}}\cdot d_{(1-2\epsilon)\Delta} \leq \frac{d_{(1-2\epsilon)\Delta}}{n} < 1$$ 
	As $(1-2\epsilon)\Delta + \frac{2\log^2 n}{\epsilon^2} \leq (1-\epsilon)\Delta$ when $\Delta \geq \frac{10\log^2 n}{\epsilon^3}$, which means $\Delta$ has been reduced by a factor of at most $1-\epsilon$ in the end of each while-loop and there are at most $O(\frac{1}{\epsilon}\log n)$ while-loops within the large-step phase. In summary, the total running time of the large-step phase is $O\left(\frac{\log^2 n}{\epsilon^2}m \times \frac{\log^2 n}{\epsilon^2} \times \frac{\log n}{\epsilon}\right) = O\left(m\cdot\frac{\log^5 n}{\epsilon^5}\right)$.
\end{proof}

In the small-step phase, every iteration $\phi(\tree)$ shrinks by a factor of $\le 1 - \frac{\epsilon^2}{5\log^2 n}$, after $O(\frac{\log^3 n}{\epsilon^2})$ rounds, $\phi(\tree)$ will be smaller than $c^{\Delta}$.

\begin{lemma}\label{runtime2}
	The running time of the small-step phase is bounded by $O(\frac{1}{\epsilon^7}m\log^7 n)$.
\end{lemma}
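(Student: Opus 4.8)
The plan is to reduce the whole statement to counting invocations of \textsf{AugSeqDegRed} in the small-step phase, since each invocation costs $O(\frac{1}{\epsilon^2}m\log^2 n)$ time by the earlier implementation lemma. I would organize the count by the (monotonically decreasing) value of $\Delta$: the outer while-loop keeps $\Delta$ fixed throughout its body, so I first show that for each fixed value of $\Delta$ the inner while-loop runs at most $O(\frac{\log^3 n}{\epsilon^2})$ iterations before $\Delta$ must drop, and then multiply by the number of distinct values of $\Delta$, which is at most $\frac{10\log^2 n}{\epsilon^3}$ because $\Delta$ starts below $\frac{10\log^2 n}{\epsilon^3}$, decreases by at least $1$ every time it changes, and the phase halts once $\Delta<\frac{20\log n}{\epsilon^2}$.

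The heart of the argument is the potential $\phi(\tree)=\sum_{i=\Delta+1-\log n}^{\Delta} c^i|N_i|$ with $c=6(2+\log_{1+\epsilon}n)$, together with the claim that one iteration of the inner loop (one non-returning call \textsf{AugSeqDegRed(k)}) multiplies $\phi$ by at most $1-\frac{\epsilon^2}{5\log^2 n}$. To prove this I would analyze one application of the tree modification of Lemma~\ref{aug-tree} to a single augmenting sequence of length $h\le 1+\log_{1+\epsilon}n$. Among vertices of degree $\geq k$ only $w_0$ changes, dropping exactly one level (from $\deg(w_0)\in[\Delta+1-\log n,\Delta]$ to $\deg(w_0)-1$), so $\phi$ loses at least $c^{\deg(w_0)-1}(c-1)\geq c^{k-1}(c-1)$; the only other vertices whose degrees change are $z_1,\dots,z_h$ and $w_h$, each climbing exactly one level and, being unmarked, landing at degree at most $k-1$, so together they add at most $(h+1)c^{k-1}\leq (2+\log_{1+\epsilon}n)c^{k-1}=\frac{c}{6}c^{k-1}$ to $\phi$. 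Hence each modification decreases $\phi$ by at least $c^{k-1}(c-1-\frac{c}{6})\geq \frac{c^k}{2}$ by the choice of $c$. Next, because the call reduces $d_k$ by at least $\frac{\epsilon^2}{2\log n}d_k$ while each modification reduces $d_k$ by at most $k$ (only $w_0$ touches $S_k$), at least $\frac{\epsilon^2 d_k}{2k\log n}$ modifications occur; and since $k$ maximizes $c^i|N_i|$ over the at most $\log n$ indices in the window, $d_k\geq k|N_k|\geq \frac{k}{c^k\log n}\phi$. Multiplying the three bounds yields a total decrease of $\phi$ of at least $\frac{\epsilon^2}{4\log^2 n}\phi$, which gives the claim with room to spare. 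Since $\phi\leq n^2 c^\Delta$ at the start of a fixed-$\Delta$ session, after $O(\frac{\log^3 n}{\epsilon^2})$ iterations we would have $\phi<c^\Delta\leq c^\Delta|N_\Delta|$, forcing $|N_\Delta|=0$, i.e. $\Delta$ must already have decreased; so the session length is $O(\frac{\log^3 n}{\epsilon^2})$, and the total running time is $O(\frac{\log^2 n}{\epsilon^3})\cdot O(\frac{\log^3 n}{\epsilon^2})\cdot O(\frac{1}{\epsilon^2}m\log^2 n)=O(\frac{1}{\epsilon^7}m\log^7 n)$.

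The step I expect to be the main obstacle is the per-modification potential bound — in particular being careful about exactly which degree classes change in one tree modification and about the boundary cases of the window $[\Delta+1-\log n,\Delta]$ (when $k=\Delta+1-\log n$, the vertex $w_0$ may fall out of the window, and the newly created degree-$(k-1)$ vertices may fall out of the window, both of which only help the bound). The rest is bookkeeping: confirming that $\phi$ is non-increasing across the modifications within a single call so that the per-modification bounds telescope, that $\Delta$ is constant within one outer-loop body so the window is well-defined, and that the early-return branch merely terminates the whole algorithm and hence does not affect the count.
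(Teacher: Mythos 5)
Your proposal is correct and follows essentially the same route as the paper: the same potential $\phi(\tree)=\sum_{i=\Delta+1-\log n}^{\Delta}c^i|N_i|$, the same per-modification accounting (one $S_k$-vertex loses a degree versus at most $2+\log_{1+\epsilon}n$ low-degree vertices gaining one), the same use of the maximality of $c^k|N_k|$ to relate $d_k$ to $\phi$, and the same two-level count (iterations per fixed $\Delta$ times the number of $\Delta$ values). The only slip is your claim that each modification reduces $d_k$ by at most $k$ because ``only $w_0$ touches $S_k$'': the deleted tree edge incident to $w_0$ may have its other endpoint in $S_k$ as well (Corollary~\ref{struct} only excludes $S_k$ from the paths $\rho_{w_{i+1},z_{i+1}}$ with $i\geq 1$), which is why the paper uses the bound $2k$; this changes only a constant factor and leaves your asymptotic conclusion intact.
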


\begin{proof}
	We already know that \textsf{AugSeqDegRed} runs in $O(\frac{1}{\epsilon^2}m\log^2 n)$ time. Now we study how many rounds of \textsf{AugSeqDegRed} could be invoked before $\Delta$ changes or this algorithm returns $\tree$ within a while-loop. We only focus on the previous cases because it takes a longer running time. For one execution of \textsf{AugSeqDegRed}, let $N^\prime_k, d^\prime_k, \tree^\prime$ $(k\in [\Delta + 1-\log n, \Delta])$ be snapshots of $N_k,d_k,\tree$ right before we execute \textsf{AugSeqDegRed}, and here we consider the case when $\Delta$ is not changed and $d_k \leq (1-\frac{\epsilon^2}{2\log n})\cdot d^\prime_k$.

	Next we analyse how $\phi(\tree)$ has decreased. The potential before the change is $\phi(\tree^\prime) = \sum_{i=\Delta+1-\log n}^\Delta c^i\cdot |N_k^\prime|$. Every time \textsf{AugSeqDegRed} modified $\tree$, at least one vertex in $S_k$ lost a tree edge and at most $2+\log_{1+\epsilon} n$ vertices with degree $<k-1$ gained a tree edge, and then the total loss of $\phi(\tree)$ would be at least
$$\begin{aligned}
	&(c^k - c^{k-1}) - (2+\log_{1+\epsilon}n)\cdot (c^{k-1} - c^{k-2})\geq (c^{k-1} - c^{k-2})(c - 2-\log_{1+\epsilon}n)\\
	&= c^k\cdot \left(1 - \frac{1}{c}\right)\cdot \left(1 - \frac{2+\log_{1+\epsilon} n}{c}\right) \geq c^k\cdot \left(1 - \frac{1}{c}\right)\cdot \frac{1}{2} > 0.4\cdot c^k
\end{aligned}$$

	After executing \textsf{AugSeqDegRed}, $d_k$ has decreased by $d_k^\prime - d_k\geq \frac{\epsilon^2}{2\log n}d_k^\prime \geq \frac{\epsilon^2}{2\log n}\cdot k|N_k^\prime|$. For each tree modification via Lemma~\ref{aug-tree}, at most two vertices in $S_k$ lost one degree (only removing the edge connected to $w_0$ affects $S_k$), which makes $d_k$ decreased by at most $2k$. So there are at least $(d_k^\prime - d_k) / (2k) \geq \frac{\epsilon^2}{4\log n}\cdot |N_k^\prime|$ tree modifications via Lemma~\ref{aug-tree} to $\tree$. Therefore, $$\phi(\tree)\leq \phi(\tree^\prime) - (0.4\cdot c^k)\cdot \left(\frac{\epsilon^2}{4\log n}\cdot |N_k^\prime|\right)\leq \left(1-\frac{0.1\epsilon^2}{\log^2 n}\right)\phi(\tree^\prime)$$ The second inequality holds by maximality of $c^k\cdot |N_k^\prime|$ which implies $c^k\cdot |N_k^\prime|\geq \frac{1}{\log n}\cdot \phi(\tree^\prime)$.

	In a nutshell, $\phi(\tree)$ has decreased by a factor of at most $1 - \frac{0.1\epsilon^2}{\log^2 n}$. As long as $\Delta$ has not changed, $\phi(\tree)$ belongs to the interval $(c^{\Delta}, n\cdot c^\Delta)$, and consequently, $\phi(\tree)$ could suffer at most $-\log_{1 - \frac{0.1\epsilon^2}{\log^2 n}}n = O(\frac{\log^3 n}{\epsilon^2})$ rounds of \textsf{AugSeqDegRed} before $\Delta$ decreases. There are at most $O(\frac{\log^2 n}{\epsilon^3})$ while-loops in the small-step phase because each while-loop reduces $\Delta$ by at least $1$.

	In summary, the total running time of the small-step phase is $O\left(\frac{\log^2 n}{\epsilon^2}m \times \frac{\log^3 n}{\epsilon^2} \times \frac{\log^2 n}{\epsilon^3}\right) = O\left(m\cdot\frac{\log^7 n}{\epsilon^7}\right)$
\end{proof}

\subsection*{Approximation guarantee}

When a spanning tree $\tree$ is returned within the large-step phase or the small-step phase, the vertex subsets $B_0,B_1,\cdots,B_{\lceil 1+\log_{1+\epsilon} n\rceil}$ created by \textsf{AugSeqDegRed} satisfies the blocking property (see Lemma~\ref{block}). By Lemma~\ref{aug-lower}, there is a lower bound on $\opt$ for each vertex set $B_h, 0\le h< \lceil 1+\log_{1+\epsilon} n\rceil$ as long as we get the lower bound on the number of clean components in $\tree \setminus (\bigcup_{i=0}^h B_i)$. The following two statements show the lower bound on $\opt$.

\begin{lemma}\label{num-comp}
For any vertex subset $B$ and any spanning tree $\tree$, the number of connected components in $\tree\setminus B$ is at least $\sum_{u\in B}\deg(u) - 2|B| + 2$.
\end{lemma}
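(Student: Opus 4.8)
The plan is to reason about how the connected components of $\tree \setminus B$ are created as we delete the vertices of $B$ one at a time from the spanning tree $\tree$. Removing a single vertex $u$ of degree $d$ from a tree (or forest) splits the component containing $u$ into exactly $d$ pieces — it deletes $1$ vertex but creates $d$ new components in place of the $1$ that contained $u$, a net gain of $d - 1$ components, provided all $d$ tree-neighbours of $u$ are still present. When we delete the $j$-th vertex $u_j$ of $B$, however, some of its tree-neighbours may already have been deleted (they lie in $B$ and were removed earlier), so the net gain is only $\deg_{\tree}(u_j) - 1 - (\text{number of tree-neighbours of } u_j \text{ already removed})$, which could even be as small as $-1$ if $u_j$ has become isolated.

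The key bookkeeping step is to sum this net gain over all $|B|$ deletions. Starting from $1$ component (the whole tree $\tree$), after deleting all of $B$ the number of components of $\tree \setminus B$ is
$$
1 + \sum_{u \in B} \bigl(\deg_{\tree}(u) - 1\bigr) - (\text{total number of "already-removed-neighbour" events}).
$$
Each such event corresponds to an ordered pair $(u_j, u_i)$ with $i < j$ and $(u_i, u_j)$ a tree edge; since $\tree$ restricted to $B$ is a forest, the number of tree edges with both endpoints in $B$ is at most $|B| - 1$, hence the total number of these events is at most $|B| - 1$. Plugging this in gives a lower bound of $1 + \sum_{u \in B}(\deg(u) - 1) - (|B| - 1) = \sum_{u\in B}\deg(u) - 2|B| + 2$, exactly as claimed. (If $B$ is empty the bound reads $2$, which is vacuous but harmless; more carefully one checks the edge cases, e.g. $B$ spanning all of $V$, separately, but they only make the true count larger.)

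The main obstacle is making precise the claim that each deletion of $u_j$ raises the component count by $\deg_{\tree}(u_j)$ minus the number of its neighbours already deleted, and that this accounting is tight in the sense needed — i.e. controlling the double-counting so that the subtracted quantity is genuinely at most $|B|-1$ rather than something larger. The clean way is to induct on $|B|$: remove vertices of $B$ one by one and track the quantity $(\text{\# components}) + (\text{\# undeleted tree edges incident to } B)$ or, equivalently, apply Euler's formula / the forest identity $(\text{\#components}) = (\text{\#vertices}) - (\text{\#edges})$ directly to the forest $\tree \setminus B$, writing $\#\text{edges}(\tree\setminus B) = (n-1) - |\{\text{edges of }\tree\text{ incident to }B\}|$ and bounding the number of tree edges incident to $B$ above by $\sum_{u\in B}\deg(u) - (|B|-1)$ using again that $\tree[B]$ is a forest with at most $|B|-1$ edges (each edge inside $B$ being double-counted by the degree sum). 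That single inequality is the heart of the argument and everything else is arithmetic.
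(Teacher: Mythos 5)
Your proposal is correct and, especially in its second (forest-identity) form, is essentially the paper's own argument: both rest on the observation that $\tree$ restricted to $B$ is a forest, so at most $|B|-1$ tree edges have both endpoints in $B$ and hence at least $\sum_{u\in B}\deg(u)-|B|+1$ tree edges are incident on $B$, after which the component count is arithmetic. One wording slip: the quantity $\sum_{u\in B}\deg(u)-(|B|-1)$ is a \emph{lower} bound on the number of tree edges incident to $B$ (you wrote ``above''), which is the direction your calculation actually uses.
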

\begin{proof}
Note that there are at least $\sum_{u\in B}\deg(u) - |B| + 1$ tree edges incident on $B$, and so removing all of these edges would break $\tree$ into $\geq \sum_{u\in B}\deg(u) - |B| + 2$ components. Therefore, excluding singleton components formed by $B$, there are $\geq\sum_{u\in B}\deg(u) - 2|B| + 2$ components from $\tree\setminus B$. 
\end{proof}

\begin{lemma}\label{num-clean}
If a spanning tree $\tree$ is returned within the large-step phase or the small-step phase and $k$ is the parameter of the last invoked \textsf{AugSeqDegRed}, for any $0\le h < \lceil 1+\log_{1+\epsilon} n\rceil$, the number of clean components in $\tree \setminus (\bigcup_{i=0}^h B_i)$ is at least $k\cdot(1 - 4\epsilon)\sum_{i=0}^h |B_i| + 1$ for $\epsilon\in (0, \frac{1}{48})$. Furthermore,
$$\opt \geq k(1-4\epsilon)\cdot \frac{\sum_{i=0}^h |B_i|}{\sum_{i=0}^{h+1} |B_i|}$$
\end{lemma}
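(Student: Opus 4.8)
The plan is to lower-bound the number of clean components in $\tree \setminus (\bigcup_{i=0}^h B_i)$ in two different ways, combine them, and then feed the result into Lemma~\ref{aug-lower}. Recall that Lemma~\ref{num-comp} gives that $\tree \setminus (\bigcup_{i=0}^h B_i)$ has at least $d_k^{(h)} - 2\sum_{i=0}^h|B_i| + 2$ components in total, where $d_k^{(h)} := \sum_{u\in \bigcup_{i=0}^h B_i}\deg(u)$; note every vertex in $B_0 = S_k$ has degree $\ge k$, and every vertex in $B_j$, $j\ge 1$, is marked hence contributes something we will need to control. The subtlety is that Lemma~\ref{aug-lower} needs \emph{clean} components, i.e. components containing no marked vertex, so I must argue that only few of these components are ``dirty.''

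First I would bound the number of dirty components. A component of $\tree \setminus (\bigcup_{i=0}^h B_i)$ is dirty iff it contains a marked vertex $x$ with $\deg(x) \le k-2$ (marked vertices of degree $\ge k-1$ that survived outside $\bigcup B_i$ — actually a marked vertex currently of degree $k-1$ is exactly a vertex of $N_{k-1}$, but the key point is these marked-but-low-degree vertices were created by tree modifications in \textsf{AugSeqDegRed}). Here is where I would invoke the termination guarantee: since the tree was returned, in the last invocation \textsf{AugSeqDegRed}$(k)$ failed to reduce $d_k$ by the factor $1-\frac{\epsilon^2}{2\log n}$, so the number of successful augmenting-sequence modifications in that last run was small — at most $O(\frac{\epsilon^2}{\log n}\,d)$-ish, and more importantly each modification marks at most one new vertex (the $z_i$'s that drop to degree $k-1$, but re-examining Lemma~\ref{aug-tree}, each modification in the sequence can newly mark up to one $z_i$ per edge, so up to $h+1 = O(\frac{1}{\epsilon}\log n)$ new marked vertices per augmenting sequence). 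Summing over all invocations of \textsf{AugSeqDegRed} since the last time the marked set was re-initialized (which is since $k$ was last incremented / since this $k$-phase began), and using that $d_k$ shrinks geometrically except in the last step, the total number of marked vertices of degree $\le k-2$ ever created is $O(\epsilon \cdot k \sum|B_i|)$ or similar — small compared to $k\sum_{i=0}^h|B_i|$. Each such vertex lies in at most one component, so the number of dirty components is at most this quantity.

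Next I would combine: (number of clean components) $\ge$ (total components) $-$ (dirty components) $\ge \big(d_k^{(h)} - 2\sum_{i=0}^h|B_i|\big) - O(\epsilon k \sum_{i=0}^h|B_i|)$. Since $d_k^{(h)} \ge k\sum_{i=0}^h|B_i|$ (using $\deg(u)\ge k$ on $B_0$ and $\deg(u)\ge k-1$ on the marked $B_j$'s, plus the fact that when the tree is returned within the small-step phase $k > \frac{19\log n}{\epsilon^2}$, so the additive $-2\sum|B_i|$ and the error term are both absorbable into a $(1-4\epsilon)$ factor for $\epsilon\in(0,\frac{1}{48})$), I get the number of clean components exceeds $k(1-4\epsilon)\sum_{i=0}^h|B_i| + 1$. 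Plugging $l > k(1-4\epsilon)\sum_{i=0}^h|B_i| + 1$ into Lemma~\ref{aug-lower} immediately yields
$$\opt \ge \frac{l-1}{\sum_{i=0}^{h+1}|B_i|} \ge k(1-4\epsilon)\cdot\frac{\sum_{i=0}^h|B_i|}{\sum_{i=0}^{h+1}|B_i|}.$$

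The main obstacle is the bookkeeping in the second step: precisely accounting for how many vertices get marked-but-low-degree across the entire sequence of \textsf{AugSeqDegRed}$(k)$ calls (not just the last one) and showing this count is genuinely $O(\epsilon)$ times $k\sum|B_i|$ rather than merely $o(k\sum|B_i|)$. This requires carefully using both the geometric decay of $d_k$ in the non-terminal iterations and the $d_{k-1}\le 2d_k$ guard from the large-step phase (or the $\arg\max$ choice in the small-step phase) to relate $|N_{k-1}|$ to $\sum|B_i|$; the degree threshold $\Delta \ge \frac{20\log n}{\epsilon^2}$ is exactly what makes the $\frac{\log n}{\epsilon}$-per-sequence marking overhead negligible. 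The rest is a routine chain of inequalities.
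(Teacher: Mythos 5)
Your skeleton matches the paper's (Lemma~\ref{num-comp} for total components, subtract components containing marked vertices, absorb error terms, then Lemma~\ref{aug-lower}), but the two quantitative steps that carry the whole proof are respectively wrong and left unexecuted. First, your inequality $\sum_{u\in\bigcup_{i=0}^h B_i}\deg(u)\geq k\sum_{i=0}^h|B_i|$, justified by ``$\deg(u)\ge k$ on $B_0$ and $\deg(u)\ge k-1$ on the marked $B_j$'s,'' does not hold: a marked vertex stays marked even after its degree drops below $k-2$, and every tree modification deletes edges, so vertices of $B_0$ fall below degree $k$ (that is the very purpose of the algorithm) and vertices of $\bigcup_{i\ge 1}B_i$ lose degree as the non-$w$ endpoints of deleted edges. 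The paper's proof lower-bounds this degree sum by $(k-1)\sum_{i=1}^h|B_i| + \bigl(1-\frac{3\epsilon^2}{2\log n}-\frac{\epsilon}{2}\bigr)d_k'$, where $d_k'$ is the snapshot before the last call: the return condition $d_k>(1-\frac{\epsilon^2}{2\log n})d_k'$ caps the total degree loss of $S_k$, and since each modification removes at most $\lceil 1+\log_{1+\epsilon}n\rceil$ tree edges, it also caps the losses inside $\bigcup_{i\ge1}B_i$. You invoke the failure condition only for counting new marks, not for this degree accounting, which is where it is most needed.

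Second, your plan for bounding dirty components — summing newly marked vertices over \emph{all} invocations of \textsf{AugSeqDegRed} since the marked set was re-initialized, using geometric decay of $d_k$ — is the wrong route and would not give the bound: in a successful call $d_k$ may drop by a constant fraction, so the number of modifications (hence new marks, each modification marking up to $2+\log_{1+\epsilon}n$ vertices) in those calls can be of order $d_k'\log n/\epsilon$, which is not $O(\epsilon\, k\sum_i|B_i|)$. What saves the argument is that \textsf{AugSeqDegRed} re-marks at its first line, so only two sets matter: $M$, the vertices newly marked during the \emph{last, failed} call, with $|M|\le(d_k'-d_k)(2+\log_{1+\epsilon}n)\le\epsilon d_k'$, and the initially marked set $N_{k-1}'\subseteq S_{k-1}'$, with $|S_{k-1}'|\le d_{k-1}'/(k-1)\le\epsilon d_k'$ using the guard $d_{k-1}\le 2d_k$ in the large-step phase (respectively the $\arg\max$ choice giving $|N_k|\ge|N_{k-1}|/c$ in the small-step phase) together with $k\ge\frac{19\log n}{\epsilon^2}$. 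You do name these ingredients at the end, but you explicitly defer exactly this bookkeeping as ``the main obstacle,'' and it is the heart of the paper's proof rather than a routine chain of inequalities; also note that a component is dirty as soon as it contains \emph{any} marked vertex (including ones of current degree $k-1$ outside $\bigcup_i B_i$), which is why the paper subtracts $|M\cup S_{k-1}'|$ rather than only counting marked vertices of degree at most $k-2$.
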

\begin{proof}
	By Lemma \ref{num-comp}, the number of tree components in $\tree \setminus (\bigcup_{i=0}^h B_i)$ is at least $$\sum_{u\in \bigcup_{i=0}^h B_i}\deg(u) - 2\left|\bigcup_{i=0}^h B_i\right| + 2$$
	Let $d_k^\prime, d_{k-1}^\prime$, $S_{k-1}^\prime$ and $S_k^\prime$ be snapshots of $d_k, d_{k-1}$, $S_{k-1}$ and $S_k$ right before the last instance of \textsf{AugSeqDegRed} started and let $M$ be the set of all marked vertices $\notin S_{k-1}^\prime$ (i.e., vertices that are initially unmarked) by the end of \textsf{AugSeqDegRed}. Then, the number of clean components in $\tree \setminus (\bigcup_{i=0}^h B_i)$ is at least
	$$\sum_{u\in \bigcup_{i=0}^h B_i}\deg(u) - 2\sum_{i=0}^h |B_i| + 2 - |M\cup S_{k-1}^\prime|$$
	
	The argument consists of a lower bound on $\sum_{u\in \bigcup_{i=0}^h B_i}\deg(u)$ and an upper bound on $|M\cup S_{k-1}^\prime|$. 
	
	\begin{enumerate}[(1)]
		\item Lower bound on $\sum_{u\in \bigcup_{i=0}^h B_i}\deg(u)$.
		
		By the \textsf{Layering} algorithm $B_0 = S_k$, then we have $\sum_{u\in B_0}\deg(u) = d_k$.
	
		For any vertex $u\in \bigcup_{i=1}^h B_i$, $deg(u) = k-1$ by the time $u$ was first added to some $B_i$. After that, $\deg(u)$ could only decrease when we modify $\tree$ by an augmenting sequence $(w_1,z_1),\cdots,(w_t,z_t)$ where $u=w_j$ for some $1\le j\le t$. Since $t \le \lceil 1+\log_{1+\epsilon} n\rceil$, during a tree modification, at least one vertex in $S_k$ loses one degree and at most $\lceil 1+\log_{1+\epsilon} n\rceil$ vertices in $\bigcup_{i=1}^h B_i$ lose one degree separately. As the total number of the degree loss in $S_k$ is $(d_k^\prime - d_k)$, we have
	
		$$\sum_{u\in \bigcup_{i=1}^h B_i}\deg(u)\geq (k-1)\sum_{i=1}^h |B_i| - (d_k^\prime - d_k)\lceil 1 + \log_{1+\epsilon} n \rceil$$
	
		Since $d_k > (1 - \frac{\epsilon^2}{2\log n})\cdot d_k^\prime$, we get a lower bound on $\sum_{u\in \bigcup_{i=0}^h B_i}\deg(u)$,
	
		$$\begin{aligned}
		\sum_{u\in \bigcup_{i=0}^h B_i}\deg(u) &\geq d_k + (k-1)\sum_{i=1}^h |B_i| - (d_k^\prime - d_k)\lceil 1 + \log_{1+\epsilon} n \rceil\\
		&\geq (k-1)\sum_{i=1}^h |B_i| + \left(1 - \frac{\epsilon^2}{2\log n}\right)d_k^\prime - \frac{\epsilon^2}{2\log n}(2+ \log_{1+\epsilon} n) d_k^\prime\\
		&\geq (k-1)\sum_{i=1}^h |B_i| + \left(1 - \frac{3\epsilon^2}{2\log n}- \frac{\epsilon}{2}\right)d_k^\prime
		\end{aligned}$$
		
		\item Upper bound on $|M|$.
		
		The argument is similar to (1). An unmarked vertex $u$ is marked only when we modify $\tree$ by an augmenting sequence $(w_1,z_1),\cdots,(w_t,z_t)$ where $u=z_j$ for some $1\le j\le t$ or $u=w_t$. Since $t \le 1+\log_{1+\epsilon} n$, during a tree modification, at least one vertex in $S_k$ loses one degree and at most $2+\log_{1+\epsilon} n$ unmarked vertices are marked. Then we get a upper bound on $|M|$.
		
		$$|M| \leq (d_k^\prime - d_k)(2 + \log_{1+\epsilon} n) \leq \epsilon \cdot d_k^\prime$$
		
		\item Upper bound on $|S_{k-1}^\prime|$.

		First we claim $\frac{d_k^\prime}{d_{k-1}^\prime} \geq \frac{1}{\epsilon(k-1)}$ when $k\geq \frac{9\log n}{\epsilon^2} - \log n \geq \frac{8\log n}{\epsilon^2}$ and $n > 2^\epsilon$. In the large-step phase, the inequality holds since $d_k^\prime \geq \frac{1}{2} d_{k-1}^\prime$. In the small-step phase, by maximality of $c^k\cdot |N_k|$, we have $|N_k|\geq \frac{1}{c}\cdot |N_{k-1}|$. Then, $$\frac{d_k^\prime}{d_{k-1}^\prime} = \frac{\sum_{i=k}^\Delta i|N_i|}{\sum_{i=k-1}^\Delta i|N_i|} > \frac{k|N_k|}{k|N_k| + (k-1)|N_{k-1}|}\geq \frac{1}{1 + \frac{c(k-1)}{k}} > \frac{1}{c+1} \geq \frac{1}{\epsilon(k-1)}$$
		The last inequality holds by $c = 2\log_{1+\epsilon} n + 4 \leq \frac{2\ln 2}{\epsilon}\log n + 4$, $k\cdot\epsilon\geq \frac{8\log n}{\epsilon}$ and $\epsilon<\log n\leq \frac{\log n}{\epsilon}$. Then we have the upper bound on $|S_{k-1}^\prime|$:
		$$|S_{k-1}^\prime| \leq \frac{d_{k-1}^\prime}{k-1} \leq \epsilon\cdot d_k^\prime$$
	\end{enumerate}
	
	Summing up (1), (2), (3), for $n > 2$ and $\epsilon\in (0,\frac{1}{48})$, $k\geq \frac{9\log n}{\epsilon^2} - \log n$, we have
	$$\begin{aligned}
	&\sum_{u\in \bigcup_{i=0}^h B_i}\deg(u) - 2\sum_{i=0}^h |B_i| + 2 - |M\cup S_{k-1}^\prime|\\
	&\geq \left(1-\frac{3\epsilon^2}{2\log n} -2.5\epsilon\right)\cdot d_k^\prime  + (k-1)\sum_{i=1}^h |B_i| - 2\sum_{i=0}^h |B_i| + 2\\
	&\geq \left(1-2.6\epsilon\right)\cdot k|B_0|  - 2|B_0| + (k-3)\sum_{i=1}^h |B_i| + 2\\
	&\geq k(1-4\epsilon)\cdot \sum_{i=0}^h |B_i| +2
	\end{aligned}$$
	We apply Lemma~\ref{aug-lower}, and conclude the proof
	$$\opt \geq \frac{k(1-4\epsilon)\cdot\sum_{i=0}^h |B_i| + 1}{\sum_{i=0}^{h+1} |B_i|} \geq k(1-4\epsilon)\cdot \frac{\sum_{i=0}^h |B_i|}{\sum_{i=0}^{h+1} |B_i|}$$
\end{proof}

In the following two statements, we combine all the inequalities for each $B_h$ and get the upper bound on $\Delta$ with $\opt$ in both the large-step phase and the small-step phase.

\begin{lemma}\label{guarantee1}
	When a spanning tree $\tree$ is returned within the large-step phase, it must be that $\Delta \leq (1+8\epsilon)\cdot \opt$ for $\epsilon\in (0, \frac{1}{48})$.
\end{lemma}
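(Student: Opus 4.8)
The plan is to turn the per-layer lower bound of Lemma~\ref{num-clean} into a lower bound on $\opt$ that is linear in $\Delta$ via an averaging argument over the $\Theta(\log_{1+\epsilon}n)$ layers, and then chase constants. Let $k$ be the parameter of the last invoked \textsf{AugSeqDegRed} before $\tree$ is returned, and let $B_0,B_1,\dots,B_H$ be the layering produced by that call's last invocation of \textsf{Layering}. I would first record two structural facts. (i) Since \textsf{AugSeqDegRed} exits its repeat-loop only once $h\ge 1+\log_{1+\epsilon}n$, we have $H\ge 1+\log_{1+\epsilon}n$. (ii) In the large-step phase $k$ is initialized to $(1-2\epsilon)\Delta+1$ for the tree degree $\Delta$ at the start of the current outer iteration and thereafter only increases, while no tree modification performed via Lemma~\ref{aug-tree} ever raises a vertex's degree to $k$ or higher, so the degree of $\tree$ is non-increasing during the large-step phase; hence the degree $\Delta$ of the returned tree still satisfies $k\ge(1-2\epsilon)\Delta+1>(1-2\epsilon)\Delta$. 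Also $|B_0|=|S_k|\ge 1$, since $d_k>0$ holds throughout the inner while-loop and $d_k$ is non-increasing during \textsf{AugSeqDegRed}.

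Write $a_h=\sum_{i=0}^h|B_i|$ for $0\le h\le H$, so that $1\le a_0\le a_1\le\cdots\le a_H\le n$. The core step is a pigeonhole argument on the consecutive ratios $a_{h+1}/a_h$, $0\le h\le H-1$: their product telescopes,
$$\prod_{h=0}^{H-1}\frac{a_{h+1}}{a_h}=\frac{a_H}{a_0}\le n,$$
and since there are $H\ge 1+\log_{1+\epsilon}n$ of them, some index $h^\star\in\{0,\dots,H-1\}$ satisfies $a_{h^\star+1}/a_{h^\star}\le n^{1/H}\le n^{1/\log_{1+\epsilon}n}=1+\epsilon$. Plugging $h=h^\star$ into Lemma~\ref{num-clean} gives
$$\opt\ \ge\ k(1-4\epsilon)\cdot\frac{a_{h^\star}}{a_{h^\star+1}}\ \ge\ \frac{k(1-4\epsilon)}{1+\epsilon}\ >\ \frac{(1-2\epsilon)(1-4\epsilon)}{1+\epsilon}\,\Delta.$$

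It then remains to verify the elementary inequality $\frac{(1-2\epsilon)(1-4\epsilon)}{1+\epsilon}\ge\frac{1}{1+8\epsilon}$ for $\epsilon\in(0,\tfrac{1}{48})$, i.e. $(1-2\epsilon)(1-4\epsilon)(1+8\epsilon)\ge 1+\epsilon$. Expanding, the left-hand side equals $1+2\epsilon-40\epsilon^2+64\epsilon^3$, so the inequality is equivalent to $\epsilon(1-40\epsilon+64\epsilon^2)\ge 0$, which holds because $1-40\epsilon+64\epsilon^2>1-\tfrac{40}{48}>0$ on the given range. Combining this with the previous display yields $\opt>\Delta/(1+8\epsilon)$, i.e. $\Delta<(1+8\epsilon)\opt$, as claimed.

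I do not expect a deep obstacle here: once Lemma~\ref{num-clean} is in hand the argument is a "blocking-flow style" pigeonhole over the layer sizes followed by constant-chasing. The points needing the most care are (a) the monotonicity claim in fact (ii), i.e. confirming that the $\Delta$ against which $k$ was chosen still upper-bounds the degree of the returned tree, and (b) checking that the hypotheses of Lemma~\ref{num-clean} are in force — in particular $\epsilon\in(0,\tfrac{1}{48})$ and the implicit regime $k\ge\tfrac{19\log n}{\epsilon^2}$, which holds because $\Delta\ge\tfrac{10\log^2 n}{\epsilon^3}$ forces $k>(1-2\epsilon)\tfrac{10\log^2 n}{\epsilon^3}\ge\tfrac{19\log n}{\epsilon^2}$ — together with the routine arithmetic in the final two displays.
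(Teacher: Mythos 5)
Your proposal is correct and follows essentially the same route as the paper: pigeonhole over the prefix sums $\sum_{i\le h}|B_i|$ of the $\ge 1+\log_{1+\epsilon}n$ layers to find a ratio at most $1+\epsilon$, apply Lemma~\ref{num-clean} with $k>(1-2\epsilon)\Delta$, and finish by constant-chasing (your inequality $(1-2\epsilon)(1-4\epsilon)(1+8\epsilon)\ge 1+\epsilon$ is just the paper's bound $\frac{1+\epsilon}{1-6\epsilon+8\epsilon^2}<1+8\epsilon$ rearranged). The extra care you take — verifying that the tree degree never increases during the phase so $k>(1-2\epsilon)\Delta$ still holds for the returned tree, and checking the regime $k\ge\frac{19\log n}{\epsilon^2}$ needed by Lemma~\ref{num-clean} — only makes explicit what the paper leaves implicit.
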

\begin{proof}
	Consider the most recent execution of \textsf{AugSeqDegRed} before returning. By the previous subsection, this instance of \textsf{AugSeqDegRed} has created a sequence of disjoint vertex subsets $B_0, B_1, \cdots, B_{1+\log_{1+\epsilon} n}$ that satisfy the blocking property. By the pigeon-hole principle, there exists an $h$ such that $\frac{\sum_{i=0}^h |B_i|}{\sum_{i=0}^{h+1} |B_i|} \geq \frac{1}{1+\epsilon}$. Then by Lemma \ref{num-clean}, (recall that in the large-step phase $k>(1-2\epsilon)\Delta$)
	$$\opt\geq k(1-4\epsilon)\cdot \frac{1}{1+\epsilon} > \frac{1-6\epsilon + 8\epsilon^2}{1+\epsilon}\Delta$$
	or equivalently, $\Delta \leq \frac{1+\epsilon}{1-6\epsilon + 8\epsilon^2}\opt < (1+8\epsilon)\opt$ when $\epsilon \in (0, \frac{1}{48})$.
\end{proof}

\begin{lemma}\label{guarantee2}
	When a spanning tree $\tree$ is returned within the small-step phase, it must be that $\Delta\leq (1+6\epsilon)\opt + \log n$ for $\epsilon\in (0, \frac{1}{48})$.
\end{lemma}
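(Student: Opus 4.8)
The plan is to reuse the machinery of Lemma~\ref{guarantee1} almost verbatim; the only genuinely new feature is the additive $\log n$ term, which appears because the reduction parameter $k$ of the last \textsf{AugSeqDegRed} call in the small-step phase need not equal $\Delta$ — it only satisfies $k > \Delta - \log n$.

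First I would isolate the most recent call \textsf{AugSeqDegRed}$(k)$ executed before $\tree$ is returned. By the discussion preceding this lemma, that call leaves behind a family of disjoint sets $B_0, B_1, \dots, B_{\lceil 1+\log_{1+\epsilon} n\rceil}$ obeying the blocking property, with $B_0 = S_k$; hence $|B_0|\ge 1$ because $k\le\Delta$. I would also check that Lemma~\ref{num-clean} is applicable to this $k$: in the small-step phase $k > \Delta - \log n \ge \frac{20\log n}{\epsilon^2} - \log n > \frac{19\log n}{\epsilon^2}$, which is exactly the hypothesis required there.

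Next comes the pigeon-hole step, identical to Lemma~\ref{guarantee1}. The telescoping product of the ratios $\bigl(\sum_{i=0}^{h+1}|B_i|\bigr)\big/\bigl(\sum_{i=0}^{h}|B_i|\bigr)$ equals $\bigl(\sum_{i=0}^{\lceil1+\log_{1+\epsilon}n\rceil}|B_i|\bigr)\big/|B_0| \le n$, while the number of such ratios exceeds $\log_{1+\epsilon}n$; if every ratio were larger than $1+\epsilon$, their product would exceed $(1+\epsilon)^{\log_{1+\epsilon}n}=n$, a contradiction. So there is an index $h$ with $\bigl(\sum_{i=0}^h|B_i|\bigr)\big/\bigl(\sum_{i=0}^{h+1}|B_i|\bigr) \ge \frac{1}{1+\epsilon}$, and feeding this $h$ into Lemma~\ref{num-clean} yields $\opt \ge \frac{k(1-4\epsilon)}{1+\epsilon}$.

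Finally I would substitute $k > \Delta - \log n$ to obtain $\opt > \frac{(1-4\epsilon)(\Delta-\log n)}{1+\epsilon}$, rearrange to $\Delta < \frac{1+\epsilon}{1-4\epsilon}\opt + \log n$, and close with the elementary inequality $\frac{1+\epsilon}{1-4\epsilon}\le 1+6\epsilon$, which holds precisely when $(1+6\epsilon)(1-4\epsilon)=1+2\epsilon-24\epsilon^2\ge 1+\epsilon$, i.e.\ for $\epsilon\le\frac{1}{24}$ — true since $\epsilon\in(0,\frac1{48})$. This gives $\Delta\le(1+6\epsilon)\opt+\log n$. I do not expect any serious obstacle: the argument is a short combination of Lemma~\ref{num-clean} with a geometric-mean/pigeon-hole estimate, and the only points needing a moment's care are the three checks made above — that $|B_0|\ge 1$, that $k$ is large enough for Lemma~\ref{num-clean} to apply, and the final constant manipulation.
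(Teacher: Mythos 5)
Your proposal is correct and follows essentially the same route as the paper's proof: the same pigeon-hole argument over the ratios $\bigl(\sum_{i=0}^{h}|B_i|\bigr)/\bigl(\sum_{i=0}^{h+1}|B_i|\bigr)$, the same application of Lemma~\ref{num-clean} to get $\opt \geq \frac{k(1-4\epsilon)}{1+\epsilon}$, and the same substitution $k > \Delta - \log n$ followed by the estimate $\frac{1+\epsilon}{1-4\epsilon} < 1+6\epsilon$ for $\epsilon \in (0,\frac{1}{48})$. The additional checks you make explicit (that $|B_0|\geq 1$ and that $k > \frac{19\log n}{\epsilon^2}$ so Lemma~\ref{num-clean} applies) are left implicit in the paper and are correct.
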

\begin{proof}
	Consider the most recent execution of \textsf{AugSeqDegRed} before returning. By the previous subsection, this instance of \textsf{AugSeqDegRed} has created a sequence of disjoint vertex subsets $B_0, B_1, \cdots, B_{1+\log_{1+\epsilon} n}$ that satisfies the blocking property. By the pigeon-hole principle, there exists an $h$ such that $\frac{\sum_{i=0}^h |B_i|}{\sum_{i=0}^{h+1} |B_i|} \geq \frac{1}{1+\epsilon}$. Then by Lemma \ref{num-clean}, (recall that in the small-step phase, $k>\Delta-\log n)$)
	 $$\opt\geq k(1-4\epsilon)\cdot \frac{1}{1+\epsilon} > \frac{1-4\epsilon}{1+\epsilon}(\Delta - \log n)$$ or equivalently, $\Delta \leq \frac{1+\epsilon}{1-4\epsilon}\opt + \log n < (1+6\epsilon)\opt + \log n$ for $\epsilon\in (0, \frac{1}{48})$.
\end{proof}

Now we can finish the proof of Theorem~\ref{aug-path}

\begin{proof}[Proof of Theorem~\ref{aug-path}]
	We claim that, for any constant $\epsilon \in (0, \frac{1}{6})$, the \textsf{ImprovedMDST} algorithm computes a spanning tree with tree degree at most $(1+\epsilon)\opt + \frac{576}{\epsilon^2}\log n$ in $O(\frac{1}{\epsilon^7}m\log^7 n)$ time (by resetting $\epsilon\rightarrow 8\epsilon'$ where $\epsilon'$ is the $\epsilon$ in previous analysis).

	By Lemma~\ref{runtime1} and Lemma~\ref{runtime2}, we know the total running time of the \textsf{ImprovedMDST} algorithm is bounded by $O(\frac{1}{\epsilon^7}m\log^7 n)$. The degree analysis is divided into three cases:
	\begin{itemize}
		\item If $\tree$ is returned after the small-step phase, then $\Delta < \frac{576}{\epsilon^2}\log n$ for $\epsilon\in (0, \frac{1}{6})$.

		\item If $\tree$ is returned during the large-step phase, by Lemma~\ref{guarantee1}, $\Delta \leq (1 + \epsilon)\cdot \opt$ for $\epsilon\in (0, \frac{1}{6})$.

		\item If $\tree$ is returned during the small-step phase, by Lemma~\ref{guarantee2}, $\Delta\leq (1+ \frac{3\epsilon}{4})\opt + \log n$ for $\epsilon\in (0, \frac{1}{6})$.
	\end{itemize}

	In summary, the \textsf{ImprovedMDST} algorithm computes a spanning tree with tree degree $(1+\epsilon)\opt + \frac{576}{\epsilon^2}\log n$ in $O(\frac{1}{\epsilon^7}m\log^7 n)$ time.
\end{proof}

\bibliography{ref}



\end{document}